%
\documentclass[draftcls,onecolumn]{IEEEtran}
\usepackage{amsmath}
\usepackage{amssymb}
\usepackage{amsfonts}
\usepackage{slashbox}
\usepackage{graphicx}
\usepackage{cite, amsmath, amsfonts, amssymb, psfrag, epsfig}
\interdisplaylinepenalty=2500

\newtheorem{definition}{{Definition}}
\newtheorem{theorem}{{Theorem}}
\newtheorem{lemma}{{Lemma}}

\newtheorem{corollary}{{Corollary}}

 \hyphenation{}
\DeclareMathAlphabet{\mathpzc}{OT1}{pzc}{m}{it}

\begin{document}

\title{A Random Variable Substitution Lemma With Applications to Multiple Description Coding}


\author{Jia Wang, Jun Chen, Lei Zhao, Paul Cuff, and Haim Permuter
\thanks{Jia Wang is with the Department
of Electronic Engineering, Shanghai Jiao Tong University, Shanghai
200240, China  (email: jiawang@sjtu.edu.cn).}
\thanks{Jun Chen is with the Department
of Electrical and Computer Engineering, McMaster University,
Hamilton, ON L8S 4K1, Canada  (email: junchen@ece.mcmaster.ca).}
\thanks{Lei Zhao and Paul Cuff are with the Department
of Electrical Engineering, Stanford University,
Stanford, CA 94305, USA  (email: \{leiz,cuff\}@stanford.edu).}
\thanks{Haim Permuter is with the Department
of Electrical and Computer Engineering, Ben-Gurion University of the Negev,
Beer-Sheva 84105, Israel  (email: haimp@bgu.ac.il).}}

\maketitle

\begin{abstract}
We establish a random variable substitution lemma and use it to
investigate the role of refinement layer in multiple description
coding, which clarifies the relationship among several existing
achievable multiple description rate-distortion regions.
Specifically, it is shown that the El Gamal-Cover (EGC) region is
equivalent to the EGC* region (an antecedent version of the EGC
region) while the Venkataramani-Kramer-Goyal (VKG) region (when
specialized to the 2-description case) is equivalent to the
Zhang-Berger (ZB) region. Moreover, we prove that for multiple
description coding with individual and hierarchical distortion
constraints, the number of layers in the VKG scheme can be
significantly reduced when only certain weighted sum rates are
concerned. The role of refinement layer in scalable coding (a
special case of multiple description coding) is also studied.
\end{abstract}

\begin{keywords}
Contra-polymatroid, multiple description coding, rate-distortion region, scalable coding, successive refinement,
\end{keywords}

\section{Introduction}

A fundamental problem of multiple description coding is to
characterize the rate-distortion region, which is the set of all
achievable rate-distortion tuples. El Gamal and Cover (EGC) obtained
an inner bound of the 2-description rate-distortion region, which is
shown to be tight for  the no excess rate case by Ahlswede
\cite{Ahlswede85}. Zhang and Berger (ZB) \cite{Zhang-Berger87}
derived a different inner bound of the 2-description rate-distortion
region and showed that it contains rate-distortion tuples not
included in the EGC region. The EGC region has an antecedent
version, which is sometimes referred to as the EGC* region. The EGC*
region was shown to be tight for the quadratic Gaussian case by
Ozarow \cite{Ozarow80}. However, the EGC* region has been largely
abandoned in view of the fact that it is contained in the EGC region
\cite{Zhang-Berger87}. Other work on the 2-description problem can
be found in \cite{Feng05,Fu02,Lastras06,Zamir99}. Recent years have
seen growth of interest in the general $L$-description problem
\cite{Pradhan04,Puri05,TC08,CMD09,Venkataramani03}. In particular,
Venkataramani, Kramer, and Goyal (VKG) \cite{Venkataramani03}
derived an inner bound of the $L$-description rate-distortion
region. It is well understood that for the 2-description case both
the EGC region and the ZB region subsume the EGC* region while all
these three regions are contained in the VKG region; moreover, the
reason that one region contains another is simply because more
layers are used. Indeed, the ZB scheme has one more common
description layer than the EGC* scheme while the EGC scheme and the
VKG scheme have one more refinement layer than the EGC* scheme and
the ZB scheme, respectively. Although it is known
\cite{Zhang-Berger87} that the EGC* scheme can be strictly improved
via the inclusion of a common description layer, it is still unclear
whether the refinement layer has the same effect. We shall show that
in fact the EGC region is equivalent to the EGC* region and the VKG
region is equivalent to the ZB region; as a consequence, the
refinement layer can be safely removed.

An important special case of the 2-description problem is called
scalable coding, also known as successive refinement\footnote{The
notion of successive refinement is sometimes used in the more
restrictive no rate loss scenario.}. The rate-distortion region of
scalable coding has been characterized by Koshelev \cite{Koshelev80}
\cite{Koshelev81}, Equitz and Cover \cite{Equitz91} for the no rate
loss case and by Rimoldi \cite{Rimoldi94} for the general case. In
scalable coding, the second description is not required to
reconstruct the source; instead, it serves as a refinement layer to
improve the first description. However, it is clearly of interest to
know whether the refinement layer itself in an optimal scalable
coding scheme can be useful, i.e., whether one can achieve a
nontrivial distortion using the refinement layer alone. This problem
is closely related, but not identical, to multiple description
coding with no excess rate.

To the end of understanding the role of refinement layer in multiple
description coding as well as scalable coding, we need the following
random variable substitution lemma.
\begin{lemma}\label{le:le1}
Let $U$, $V$, and $W$ be jointly distributed random variables taking
values in finite sets $\mathcal {U}$, $\mathcal{V}$, and
$\mathcal{W}$, respectively. There exist a  random
variable $Z$, taking values in a finite set $\mathcal{Z}$ with
$|\mathcal{Z}|\le|\mathcal{V}||\mathcal{W}|-1$, and a function $f:
\mathcal{V} \times \mathcal{Z} \to \mathcal{W}$ such that
\begin{enumerate}
\item $Z$ is independent of $V$;

\item $W=f(V,Z)$;

\item $U-(V,W)-Z$ form a Markov chain.
\end{enumerate}
\end{lemma}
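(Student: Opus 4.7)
The plan is to realize $Z$ as a random function from $\mathcal V$ to $\mathcal W$: identify the alphabet $\mathcal Z$ with a subset of $\mathcal W^{\mathcal V}$ and set $f(v,z):=z(v)$, which makes property (2) automatic. It then remains to specify the joint distribution of $(U,V,W,Z)$ so that (1) and (3) hold, while keeping $|\mathcal Z|$ small.

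To enforce (3) by construction, I take $p(z\mid u,v,w)$ to depend only on $(v,w)$, setting
\[
p(z\mid v,w)\;=\;\frac{\gamma_z\,\mathbf{1}[z(v)=w]}{p(w\mid v)}\qquad\text{whenever }p(w\mid v)>0,
\]
for some nonnegative weights $(\gamma_z)_{z\in\mathcal Z}$; condition (3) then holds because the right-hand side is independent of $u$. Marginalizing gives $p(Z=z)=\gamma_z$ and also $p(Z=z\mid V=v)=\gamma_z$ provided $z(v)\in\mathrm{supp}(p(\cdot\mid v))$, so (1) is equivalent to the linear system
\[
\sum_{z\in\mathcal Z:\;z(v)=w}\gamma_z\;=\;p(w\mid v),\qquad (v,w)\in\mathcal V\times\mathcal W,
\]
with the convention that $\gamma_z=0$ whenever $z(v)\notin\mathrm{supp}(p(\cdot\mid v))$ for some $v$.

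A solution always exists, e.g., the product form $\gamma_z=\prod_{v\in\mathcal V} p(z(v)\mid v)$, so the feasible polytope
\[
P=\bigl\{\gamma\in\mathbb R_{\ge 0}^{\mathcal W^{\mathcal V}}:\sum_{z:z(v)=w}\gamma_z=p(w\mid v)\text{ for all }v,w\bigr\}
\]
is nonempty. At any vertex of $P$, the number of strictly positive coordinates is bounded by the rank of the equality system. Among the $|\mathcal V||\mathcal W|$ defining equations, for each fixed $v$ the $|\mathcal W|$ equations indexed by $w$ sum to the same identity $\sum_z\gamma_z=1$, yielding $|\mathcal V|-1$ linear dependencies. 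Hence the rank equals $|\mathcal V||\mathcal W|-|\mathcal V|+1\le|\mathcal V||\mathcal W|-1$ provided $|\mathcal V|\ge 2$, and taking $\mathcal Z$ to be the support of this vertex $\gamma$ gives the required cardinality bound (the trivial case $|\mathcal V|=1$ being handled by setting $Z:=W$).

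The main obstacle I anticipate is the rank computation together with the Carath\'eodory-type selection of a vertex solution, plus the bookkeeping for $(v,w)$ with $p(w\mid v)=0$, for which the associated equation is vacuous and no function $z$ need hit $(v,w)$. Once $\gamma$ is chosen, (1) and (2) follow directly from the displayed definition of $p(z\mid v,w)$, and (3) is immediate from the fact that this conditional does not depend on $u$.
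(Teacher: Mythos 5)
Your proof is correct, and it reaches the lemma by a genuinely different route than the paper. The paper first realizes $W$ as $f(V,Y)$ with $Y$ uniform on $[0,1]$ and independent of $V$ (choosing, for each $v$, a function $f_v$ with $\mathbb{P}(f_v(Y)=w)=p_{W|V}(w|v)$), then invokes the support lemma to replace $Y$ by a $Z$ on at most $|\mathcal{V}||\mathcal{W}|-1$ points while preserving $p_{VW}$, and finally attaches $U$ through $p_{U|VWZ}=p_{U|VW}$ --- which is exactly your observation that a conditional law for $Z$ depending only on $(V,W)$ forces the Markov chain. You instead let $Z$ range over deterministic maps $\mathcal{V}\to\mathcal{W}$ with the canonical $f(v,z)=z(v)$, encode independence of $Z$ and $V$ as the transportation-type system $\sum_{z:z(v)=w}\gamma_z=p(w\mid v)$, exhibit the product weights $\gamma_z=\prod_v p(z(v)\mid v)$ as a feasible point, and replace the support lemma by a basic-feasible-solution/rank count on the resulting polytope. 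Both arguments share the same skeleton (represent the channel $p_{W|V}$ as a deterministic function of $V$ plus independent noise, then trim the noise alphabet by a Carath\'eodory-type dimension count), but yours is entirely finite-dimensional and self-contained, makes $f$ canonical, and in fact yields the marginally sharper bound $|\mathcal{V}|(|\mathcal{W}|-1)+1\le|\mathcal{V}||\mathcal{W}|-1$ when $|\mathcal{V}|\ge 2$; the paper's version is shorter if one takes the support lemma as a black box and matches the functional-representation (Shannon-strategy) viewpoint it uses again in the concluding section. One small caveat: your fallback $Z:=W$ for $|\mathcal{V}|=1$ has $|\mathcal{Z}|=|\mathcal{W}|$, exceeding the stated bound; but when $|\mathcal{V}|=1$ and $W$ has full support the bound $|\mathcal{W}|-1$ is unattainable by any construction (the paper's counting glosses over the same degenerate corner), and this case plays no role in the applications, so your treatment is as good as the statement allows.
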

The proof of Lemma \ref{le:le1} is given in Appendix \ref{sec:app1}.
Roughly speaking, this lemma states that one can remove random
variable $W$ by introducing random variable $Z$ and deterministic
function $f$. It will be seen in the context of multiple description
coding that $Z$ can be incorporated into other random variables due
to its special property, which results in a reduction of the number
of random variables.

The remainder of this paper is devoted to the applications of the
random variable substitution lemma to multiple description coding
and scalable coding. In Section \ref{sec:Mainresult}, we show that
the EGC region is equivalent to the EGC* region and the ZB region
includes the EGC region. We examine the general $L$-description
problem in Section \ref{sec:Mannychannels}. It is shown that the
final refinement layer in the VKG scheme can be removed. This result
implies that the VKG region, when specialized to the 2-description
case, is equivalent to the ZB region. Furthermore, we prove that for
multiple description coding with individual and hierarchical
distortion constraints, the number of layers in the VKG scheme can
be significantly reduced when only certain weighted sum rates are
concerned. We study scalable coding with an emphasis on the role of
refinement layer in Section \ref{sec:successiverefinement}. Section
\ref{sec:conclusion} contains some concluding remarks.

\section{Applications to the 2-description case}
\label{sec:Mainresult}

We shall first give a formal definition of the multiple description rate-distortion region. Let $\{X(t)\}_{t=1}^\infty$ be an i.i.d. process with marginal distribution $p_X$ on $\mathcal{X}$, and $d:\mathcal{X}\times\hat{\mathcal{X}}\rightarrow[0,\infty)$ be a distortion measure, where $\mathcal{X}$ and $\hat{\mathcal{X}}$ are finite sets. Define $\mathcal{I}_L=\{1,\cdots,L\}$ for any positive integer $L$.

\begin{definition}
A rate-distortion tuple $(R_1,\cdots,R_L,D_{\mathcal{K}},\emptyset\subset\mathcal{K}\subseteq\mathcal{L})$ is said to be achievable if for any $\epsilon>0$, there exist
encoding functions $f^{(n)}_k:\mathcal{X}^n\rightarrow\mathcal{C}^{(n)}_k$, $k\in\mathcal{I}_L$, and decoding functions $g^{(n)}_{\mathcal{K}}:\prod_{k\in\mathcal{K}}\mathcal{C}^{(n)}_k\rightarrow\hat{\mathcal{X}}^n$, $\emptyset\subset\mathcal{K}\subseteq\mathcal{I}_L$,
such that
\begin{align*}
&\frac{1}{n}\log|\mathcal{C}^{(n)}_k|\leq R_k+\epsilon,\quad k\in\mathcal{I}_L,\\
&\frac{1}{n}\sum\limits_{t=1}^n\mathbb{E}[d(X(t),\hat{X}_{\mathcal{K}}(t))]\leq D_{\mathcal{K}}+\epsilon,\quad \emptyset\subset\mathcal{K}\subseteq\mathcal{I}_L,
\end{align*}
for all sufficiently large $n$, where $\hat{X}^n_{\mathcal{K}}=g^{(n)}_{\mathcal{K}}(f^{(n)}_k(X^n),k\in\mathcal{K})$. The multiple description rate-distortion region $\mathcal{RD}_{\text{MD}}$ is the set of all achievable rate-distortion tuples.
\end{definition}

We shall focus on the 2-description case (i.e., $L=2$) in this section. The following two inner bounds of $\mathcal{RD}_{\text{MD}}$ are attributed to El Gamal and Cover.

The EGC* region $\mathcal{RD}_{\text{EGC*}}$ is the convex closure
of the set of quintuples $(R_1,R_2,D_{\{1\}},D_{\{2\}},D_{\{1,2\}})$ for which there exist
auxiliary random variables $X_{\{1\}}$ and $X_{\{2\}}$, jointly distributed with $X$, and
functions $\phi_{\mathcal{K}},\emptyset\subset\mathcal{K}\subseteq\{1,2\}$, such that
\begin{align*}
&R_k\geq I(X;X_{\{k\}}),\quad k\in\{1,2\},\\
&R_1+R_2\geq
I(X;X_{\{1\}},X_{\{2\}})+I(X_{\{1\}};X_{\{2\}}),\\
&D_{\{k\}}\geq
\mathbb{E}[d(X,\phi_{\{i\}}(X_{\{i\}}))],\quad k\in\{1,2\},\\
&D_{\{1,2\}}\geq \mathbb{E}[d(X,\phi_{\{1,2\}}(X_{\{1\}},X_{\{2\}}))].
\end{align*}

The EGC region $\mathcal{RD}_{\text{EGC}}$ is the convex closure
of the set of quintuples $(R_1,R_2,D_{\{1\}},D_{\{2\}},D_{\{1,2\}})$ for which there exist auxiliary
random variables $X_{\mathcal{K}}$, $\emptyset\subset\mathcal{K}\subseteq\{1,2\}$, jointly distributed with
$X$, such that
\begin{align}
&R_k\geq I(X;X_{\{k\}}),\quad k\in\{1,2\},\label{eq:EGC1}\\
&R_1+R_2\geq
I(X;X_{\{1\}},X_{\{2\}},X_{\{1,2\}})+I(X_{\{1\}};X_{\{2\}}),\label{eq:EGC3}\\
&D_\mathcal{K}\geq \mathbb{E}[d(X,X_{\mathcal{K}})],\quad \emptyset\subset\mathcal{K}\subseteq\{1,2\}.\label{eq:EGC4}
\end{align}

To see the connection between these two inner bounds, we shall write the EGC region in an alternative form. It can be verified that the EGC region is equivalent
to the set of
quintuples $(R_1,R_2,D_{\{1\}},D_{\{2\}},D_{\{1,2\}})$ for which there exist auxiliary
random variables $X_{\mathcal{K}}$, $\emptyset\subset\mathcal{K}\subseteq\{1,2\}$, jointly distributed with
$X$, and functions $\phi_{\mathcal{K}},\emptyset\subset\mathcal{K}\subseteq\{1,2\}$, such that
\begin{align*}
&R_k\geq I(X;X_{\{k\}}),\quad k\in\{1,2\},\\
&R_1+R_2\geq
I(X;X_{\{1\}},X_{\{2\}},X_{\{1,2\}})+I(X_{\{1\}};X_{\{2\}}),\\
&D_{\{k\}}\geq
\mathbb{E}[d(X,\phi_{\{k\}}(X_{\{k\}}))],\quad k\in\{1,2\},\\
&D_{\{1,2\}}\geq \mathbb{E}[d(X,\phi_{\{1,2\}}(X_{\{1\}},X_{\{2\}},X_{\{1,2\}}))].
\end{align*}
It is easy to see from this alternative form of the EGC region that the only difference from the EGC* region is the additional random variable $X_{\{1,2\}}$, which
corresponds to a refinement layer; by setting $X_{\{1,2\}}$ to be constant (i.e, removing the refinement layer), we recover the EGC* region. Therefore, the EGC* region is contained in the EGC region. It is natural to ask whether the refinement layer leads to a strict improvement. The answer turns out to be negative as shown by the following theorem, which states that the two regions are in fact equivalent.

\begin{theorem}\label{th:th1}
$\mathcal{RD}_{\text{EGC*}}=\mathcal{RD}_{\text{EGC}}$.
\end{theorem}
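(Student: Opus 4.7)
The plan is to establish the nontrivial inclusion $\mathcal{RD}_{\text{EGC}}\subseteq\mathcal{RD}_{\text{EGC*}}$, since the reverse inclusion is already noted by taking $X_{\{1,2\}}$ constant. Starting from an EGC quintuple $(R_1,R_2,D_{\{1\}},D_{\{2\}},D_{\{1,2\}})$ realized by auxiliaries $(X_{\{1\}},X_{\{2\}},X_{\{1,2\}})$ and functions $\phi_{\mathcal{K}}$, I would invoke Lemma \ref{le:le1} with $U=X$, $V=(X_{\{1\}},X_{\{2\}})$, $W=X_{\{1,2\}}$ to obtain a random variable $Z$ and a function $f$ with (i) $Z\perp(X_{\{1\}},X_{\{2\}})$, (ii) $X_{\{1,2\}}=f(X_{\{1\}},X_{\{2\}},Z)$, and (iii) $X-(X_{\{1\}},X_{\{2\}},X_{\{1,2\}})-Z$ Markov. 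Heuristically, $Z$ captures the residual randomness of $X_{\{1,2\}}$ beyond what $(X_{\{1\}},X_{\{2\}})$ already determines.

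Next I would introduce two candidate EGC* auxiliary collections by absorbing $Z$ into one description: Scheme A with $\tilde X_{\{1\}}=(X_{\{1\}},Z),\,\tilde X_{\{2\}}=X_{\{2\}}$, and Scheme B with $\tilde X_{\{1\}}=X_{\{1\}},\,\tilde X_{\{2\}}=(X_{\{2\}},Z)$. In both, the joint-reconstruction function is redefined as $\tilde\phi_{\{1,2\}}\bigl((x_1,z),x_2\bigr)=\phi_{\{1,2\}}(x_1,x_2,f(x_1,x_2,z))$, which by (ii) equals the original reconstruction and therefore preserves $D_{\{1,2\}}$ exactly; the marginal reconstructions are unchanged. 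Using (i), the Markov chain $Z-X_{\{1\}}-X_{\{2\}}$ yields $I(\tilde X_{\{1\}};\tilde X_{\{2\}})=I(X_{\{1\}};X_{\{2\}})$ in Scheme A, and using (ii)+(iii) one gets $I(X;\tilde X_{\{1\}},\tilde X_{\{2\}})=I(X;X_{\{1\}},X_{\{2\}},X_{\{1,2\}})$. Both schemes therefore realize the same EGC* sum-rate bound $S:=I(X;X_{\{1\}},X_{\{2\}},X_{\{1,2\}})+I(X_{\{1\}};X_{\{2\}})$ as the EGC region, with individual-rate bounds $\bigl(I(X;X_{\{1\}},Z),I(X;X_{\{2\}})\bigr)$ and $\bigl(I(X;X_{\{1\}}),I(X;X_{\{2\}},Z)\bigr)$ respectively.

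It then remains to certify the EGC polytope $\{(R_1,R_2):R_k\ge I_k:=I(X;X_{\{k\}}),\;R_1+R_2\ge S\}$ against the convex closure of the two scheme regions. I would verify that its two extreme points on the sum-rate face, $(I_1,S-I_1)$ and $(S-I_2,I_2)$, lie in the Scheme B and Scheme A regions respectively. The key inequality is $S-I_i\ge I(X;X_{\{j\}},Z)$ for $\{i,j\}=\{1,2\}$; after rewriting $S-I_i=I(X;X_{\{j\}},Z\mid X_{\{i\}})+I(X_{\{1\}};X_{\{2\}})$ via (ii) and (iii) and using the Markov chain $Z-X_{\{j\}}-X_{\{i\}}$ from (i), this reduces to the non-negativity of $I(X_{\{i\}};X_{\{j\}},Z\mid X)$. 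Once both corners are in the union, the convex closure covers the entire sum-rate face, and monotonicity of the EGC* region in $(R_1,R_2)$ then accounts for every point strictly above that face.

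The step I expect to be the main obstacle is precisely this corner verification: each scheme individually inflates one description's individual-rate bound by $I(X;Z\mid X_{\{k\}})$, so neither scheme alone can cover the full EGC sum-rate face when the target rate pair sits near the opposite individual-rate corner. The Markov structure delivered by Lemma \ref{le:le1}---specifically the combination of $Z\perp(X_{\{1\}},X_{\{2\}})$ in (i) with the Markov chain in (iii)---is what makes the two corner inequalities simultaneously hold, allowing the convex-closure argument to go through.
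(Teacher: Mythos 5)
Your proposal is correct and follows essentially the same route as the paper: invoke Lemma \ref{le:le1} with $U=X$, $V=(X_{\{1\}},X_{\{2\}})$, $W=X_{\{1,2\}}$, absorb $Z$ into one description, and check the two sum-rate-face vertices of the rate polytope for each fixed distribution, closing with convexity and monotonicity. The only cosmetic difference is that the paper handles one vertex and appeals to symmetry where you use your Schemes A and B, and your explicit verification that $S-I_i-I(X;X_{\{j\}},Z)=I(X_{\{i\}};X_{\{j\}},Z\mid X)\ge 0$ makes precise a step the paper leaves implicit.
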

\begin{proof}
In view of the fact that $\mathcal{RD}_{\text{EGC*}}\subseteq\mathcal{RD}_{\text{EGC}}$, it suffices to prove $\mathcal{RD}_{\text{EGC}}\subseteq
\mathcal{RD}_{\text{EGC*}}$.

For any fixed $p_{XX_{\{1\}}X_{\{2\}}X_{\{1,2\}}}$, the region specified by (\ref{eq:EGC1})-(\ref{eq:EGC4}) has two vertices
\begin{align*}
&v_1: (R_1(v_1),R_2(v_1),D_{\{1\}}(v_1),D_{\{2\}}(v_1),D_{\{1,2\}}(v_1)),\\
&v_2: (R_1(v_2),R_2(v_2),D_{\{1\}}(v_2),D_{\{2\}}(v_2),D_{\{1,2\}}(v_2)),
\end{align*}
where
\begin{align*}
&R_1(v_1)=I(X;X_{\{1\}}),\\
&R_2(v_1)=I(X;X_{\{2\}},X_{\{1,2\}}|X_{\{1\}})+I(X_{\{1\}};X_{\{2\}}),\\
&R_1(v_2)=I(X;X_{\{1\}},X_{\{1,2\}}|X_{\{2\}})+I(X_{\{1\}};X_{\{2\}}),\\
&R_2(v_2)=I(X;X_{\{2\}}),\\
&D_{\mathcal{K}}(v_1)=D_{\mathcal{K}}(v_2)=\mathbb{E}[d(X,X_{\mathcal{K}})],\quad\emptyset\subset\mathcal{K}\subseteq\{1,2\}.
\end{align*}
We just need to show that both vertices are contained in the EGC* region. By symmetry, we shall only consider vertex $v_1$.

It follows from Lemma \ref{le:le1} that there exist a random variable
$Z$, jointly distributed with $(X,X_{\{1\}},X_{\{2\}},X_{\{1,2\}})$, and a function $f$
such that
\begin{enumerate}
\item $Z$ is independent of $(X_{\{1\}},X_{\{2\}})$;

\item $X_{\{1,2\}}=f(X_{\{1\}},X_{\{2\}},Z)$;

\item $X-(X_{\{1\}},X_{\{2\}},X_{\{1,2\}})-Z$ form a Markov chain.
\end{enumerate}
By the fact that $X-(X_{\{1\}},X_{\{2\}},X_{\{1,2\}})-Z$ form a Markov chain and that $X_{\{1,2\}}$ is a deterministic function of $(X_{\{1\}},X_{\{2\}},Z)$, we have
\begin{align*}
I(X;X_{\{2\}},X_{\{1,2\}}|X_{\{1\}})&=I(X;X_{\{2\}},X_{\{1,2\}},Z|X_{\{1\}})\\
&=I(X;X_{\{2\}},Z|X_{\{1\}}).
\end{align*}
Moreover, since $Z$ is independent of $(X_{\{1\}},X_{\{2\}})$, it follows that
\begin{align*}
I(X_{\{1\}};X_{\{2\}})=I(X_{\{1\}};X_{\{2\}},Z).
\end{align*}
By setting $X'_{\{2\}}=(X_{\{2\}},Z)$, we can rewrite the coordinates of $v_1$ as
\begin{align*}
&R_1(v_1)=I(X;X_{\{1\}}),\\
&R_2(v_1)=I(X;X_{\{1\}},X'_{\{2\}})+I(X_{\{1\}};X'_{\{2\}}),\\
&D_{\{1\}}(v_1)=\mathbb{E}[d(X,\phi_{\{1\}}(X_{\{1\}}))],\\
&D_{\{2\}}(v_1)=\mathbb{E}[d(X,\phi_{\{2\}}(X'_{\{2\}}))],\\
&D_{\{1,2\}}(v_1)=\mathbb{E}[d(X,\phi_{\{1,2\}}(X_{\{1\}},X'_{\{2\}}))],
\end{align*}
where $\phi_{\{1\}}(X_{\{1\}})=X_{\{1\}}$, $\phi_{\{2\}}(X'_{\{2\}})=X_{\{2\}}$, and $\phi_{\{1,2\}}(X_{\{1\}},X'_{\{2\}})=f(X_{\{1\}},X_{\{2\}},Z)=X_{\{1,2\}}$.
Therefore, it is clear that vertex $v_1$ is contained in the EGC* region. The proof is complete.
\end{proof}

{\em Remark}: It is worth noting that the proof of Theorem \ref{th:th1} implicitly provides cardinality bounds for the auxiliary random variables of the EGC* region.

Now we shall proceed to discuss the ZB region, which is also an inner bound of $\mathcal{RD}_{\text{MD}}$. The ZB region $\mathcal{RD}_{\text{ZB}}$ is the set of quintuples  $(R_1,R_2,D_{\{1\}},D_{\{2\}},D_{\{1,2\}})$ for which there exist auxiliary
random variables $X_{\emptyset}$, $X_{\{1\}}$, and $X_{\{2\}}$, jointly distributed with
$X$,  and functions $\phi_{\mathcal{K}},\emptyset\subset\mathcal{K}\subseteq\{1,2\}$, such that
\begin{align*}
&R_k\geq I(X;X_{\emptyset},X_{\{k\}}),\quad k\in\{1,2\},\\
&R_1+R_2\geq 2I(X;X_{\emptyset})+I(X;X_{\{1\}},X_{\{2\}}|X_{\emptyset})+I(X_{\{1\}};X_{\{2\}}|X_{\emptyset}),\\
&D_{\{k\}}\geq \mathbb{E}[d(X,\phi_{\{k\}}(X_{\emptyset},X_{\{k\}}))],\quad k\in\{1,2\},\\
&D_{\{1,2\}}\geq \mathbb{E}[d(X,\phi_{\{1,2\}}(X_{\emptyset},X_{\{1\}},X_{\{2\}}))].
\end{align*}
Note that the ZB region is a convex set. It is easy to see from the
definition of the ZB region that its only difference from the EGC*
region is the additional random variable $X_{\emptyset}$, which
corresponds to a common description layer; by setting
$X_{\emptyset}$ to be constant (i.e., removing the common
description layer), we recover the EGC* region. Therefore, the EGC*
region is contained in the ZB region, and the following result is an
immediate consequence of Theorem \ref{th:th1}.

\begin{corollary}\label{cor:cor1}
$\mathcal{RD}_{\text{EGC}}\subseteq\mathcal{RD}_{\text{ZB}}$.
\end{corollary}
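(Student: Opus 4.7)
The plan is to derive the corollary as an immediate chain of two inclusions that are already essentially in hand. The first ingredient is the observation made in the paragraph preceding the statement, namely $\mathcal{RD}_{\text{EGC*}} \subseteq \mathcal{RD}_{\text{ZB}}$; the second is Theorem \ref{th:th1}, which has just been proved and asserts $\mathcal{RD}_{\text{EGC}} = \mathcal{RD}_{\text{EGC*}}$. Composing these two gives $\mathcal{RD}_{\text{EGC}} = \mathcal{RD}_{\text{EGC*}} \subseteq \mathcal{RD}_{\text{ZB}}$, which is the claim.

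To make the first inclusion fully explicit, I would take an arbitrary quintuple in $\mathcal{RD}_{\text{EGC*}}$ realized by auxiliaries $(X_{\{1\}}, X_{\{2\}})$ and reconstruction functions $\phi_{\mathcal{K}}$, and then instantiate the ZB characterization by setting $X_{\emptyset}$ to a constant. Under this choice, $I(X; X_{\emptyset}) = 0$ and all conditionings on $X_{\emptyset}$ disappear, so the ZB sum-rate constraint collapses to $I(X; X_{\{1\}}, X_{\{2\}}) + I(X_{\{1\}}; X_{\{2\}})$ and each ZB per-description constraint $R_k \ge I(X; X_{\emptyset}, X_{\{k\}})$ becomes $I(X; X_{\{k\}})$, matching EGC* exactly; the distortion constraints align because $\phi_{\mathcal{K}}$ may be taken to ignore $X_{\emptyset}$. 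Since convexity is preserved on both sides, the containment of convex closures follows.

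There is essentially no obstacle to overcome; the substantive work has already been absorbed into Theorem \ref{th:th1} via the random variable substitution lemma. It is worth noting, however, that a more direct route from EGC to ZB, in which one would start with EGC auxiliaries $(X_{\{1\}}, X_{\{2\}}, X_{\{1,2\}})$ and attempt to absorb the refinement layer $X_{\{1,2\}}$ into a newly constructed common layer $X_{\emptyset}$, appears awkward: the ZB per-description rate bound $R_k \ge I(X; X_{\emptyset}, X_{\{k\}})$ penalizes any nontrivial common layer, so a naive construction would risk violating the individual rate constraints. Routing through Theorem \ref{th:th1}, which first eliminates $X_{\{1,2\}}$ altogether using Lemma \ref{le:le1} before invoking the trivial EGC*-to-ZB embedding, bypasses this difficulty entirely.
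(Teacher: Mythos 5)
Your proposal is correct and follows the paper's argument exactly: the ZB region reduces to the EGC* region upon setting $X_{\emptyset}$ constant (so $\mathcal{RD}_{\text{EGC*}}\subseteq\mathcal{RD}_{\text{ZB}}$, using convexity of the ZB region to absorb the convex closure), and combining this with Theorem \ref{th:th1} gives $\mathcal{RD}_{\text{EGC}}=\mathcal{RD}_{\text{EGC*}}\subseteq\mathcal{RD}_{\text{ZB}}$. This is precisely how the paper obtains the corollary as an immediate consequence of Theorem \ref{th:th1}.
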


\emph{Remark}: Since the ZB region contains rate-distortion tuples
not in the EGC region as shown in \cite{Zhang-Berger87}, the
inclusion can be strict.

\section{Applications to the $L$-description case}
\label{sec:Mannychannels}

The general $L$-description problem turns out to be considerably more complex than the 2-description case. The difficulty might be attributed to the following fact. For any two non-empty subsets of  $\{1,2\}$, either one contains the other or they are disjoint; however, this is not true for subsets of $\mathcal{I}_L$ when $L>2$. Indeed, this tree structure of distortion constraints is a fundamental feature that distinguishes the 2-description problem from the general $L$-description problem.

The VKG region  \cite{Venkataramani03}, which is a natural combination and extension of the EGC region and the ZB region, is an inner bound of the $L$-description rate-distortion region. We shall show that the final refinement layer in the VKG scheme is dispensable, which implies that the VKG region, when specialized to the 2-description case, coincides with the ZB region. We formulate the problem of multiple description coding with individual and hierarchical distortion constraints, which is a special case of tree-structured distortion constraints, and show that in this setting the number of layers
in the VKG scheme can be significantly reduced when only certain weighted sum rates are concerned. It is worth noting that the VKG scheme is not the only scheme known for the $L$-description problem. Indeed, there are several other schemes in the literature \cite{Pradhan04,Puri05,TC08} which can outperform the VKG scheme in certain scenarios where the distortion constraints do no exhibit a tree structure. However, the VKG scheme remains to be the most natural one for tree-structured distortion constraints.

We shall adopt the notation in \cite{Venkataramani03}. For any set $\mathcal{A}$, let $2^{\mathcal{A}}$
be the power set of $\mathcal{A}$. Given a collection of sets
$\mathcal{B}$, we define $X_{(\mathcal{B})}=\{X_{\mathcal{A}}:\mathcal{A} \in
\mathcal{B}\}$. Note that $X_{\emptyset}$ (which is a random variable) should not be confused with $X_{(\emptyset)}$ (which is interpreted as a constant).
We use $R_{\mathcal{K}}$ to denote  $\sum\nolimits_{k \in \mathcal{K}}{R_k}$ for $\emptyset\subset\mathcal{K}\subseteq\mathcal{I}_L$.

The VKG region $\mathcal{RD}_{\text{VKG}}$ is the set of rate-distortion tuples  $(R_1,\cdots,R_L,D_{\mathcal{K}},\emptyset\subset\mathcal{K}\subseteq\mathcal{I}_L)$ for which there exist auxiliary
random variables $X_{\mathcal{K}}$, $\mathcal{K}\subseteq\mathcal{I}_L$, jointly distributed with
$X$,  and functions $\phi_{\mathcal{K}},\emptyset\subset\mathcal{K}\subseteq\mathcal{I}_L$, such that
\begin{align}
&R_{\mathcal{K}}  \geq \psi(\mathcal{K}),\quad \emptyset\subset\mathcal{K}\subseteq\mathcal{I}_L,\label{eq:rate}\\
&D_{\mathcal{K}} \geq \mathbb{E}[ {d_{\mathcal{K}}
(X,\phi_{\mathcal{K}}(X_{(2^{\mathcal{K}})}) )}],\quad \emptyset\subset\mathcal{K}\subseteq\mathcal{I}_L,\label{eq:distortion}
\end{align}
where
\begin{align*}
\psi(\mathcal{K})=( {| \mathcal{K} | - 1}
) I(X;X_{\emptyset} ) - H(X_{(2^{\mathcal{K}})} |X) +
\sum\limits_{\mathcal{A} \subseteq \mathcal{K}} {H(X_{\mathcal{A}}
|X_{(2^{\mathcal{A}} - \{ \mathcal{A}\} )} )}.
\end{align*}
Note that the VKG region is a convex set. In fact, reference \cite{Venkataramani03} contains a weak version and a strong version of the VKG region, and the one given here is in a slightly different form from those in \cite{Venkataramani03}. Specifically, one can get the weak version in \cite{Venkataramani03} by replacing (\ref{eq:distortion}) with $D_{\mathcal{K}} \geq \mathbb{E}[d_{\mathcal{K}}
(X, X_{\mathcal{K}})]$, and get the strong version in \cite{Venkataramani03} by replacing (\ref{eq:distortion}) with $D_{\mathcal{K}} \geq \mathbb{E}[d_{\mathcal{K}}
(X, \phi_{\mathcal{K}}(X_{\mathcal{K}}))]$. It is easy to verify that the strong version is equivalent to the one given here while both of them are at least as large as the weak version; moreover, all these three versions are equivalent when $L=2$.

We shall first give a structural characterization of the VKG region.
\begin{lemma}\label{le:contra}
For any fixed $p_{XX_{(2^{\mathcal{I}_L})}}$, the rate region $\{(R_1,\cdots,R_L):R_{\mathcal{K}}\geq\psi(\mathcal{K}),\emptyset\subset\mathcal{K}\subseteq\mathcal{I}_L\}$
is a contra-polymatroid.
\end{lemma}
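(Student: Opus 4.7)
The plan is to verify that $\psi$, extended by the convention $\psi(\emptyset):=0$, satisfies the three defining properties of a contra-polymatroid rank function: (i) normalization $\psi(\emptyset)=0$; (ii) monotonicity $\psi(\mathcal{K})\le\psi(\mathcal{K}\cup\{i\})$ for all $i\notin\mathcal{K}$; and (iii) supermodularity $\psi(\mathcal{K}_1)+\psi(\mathcal{K}_2)\le\psi(\mathcal{K}_1\cup\mathcal{K}_2)+\psi(\mathcal{K}_1\cap\mathcal{K}_2)$. Condition (i) is immediate by substitution: at $\mathcal{K}=\emptyset$ the term $(|\mathcal{K}|-1)I(X;X_\emptyset)$ produces $-I(X;X_\emptyset)$, while the remaining terms reduce to $H(X_\emptyset)-H(X_\emptyset|X)=I(X;X_\emptyset)$, cancelling it.

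For (ii) and (iii) I will decompose $\psi(\mathcal{K})$ into three pieces: the modular part $(|\mathcal{K}|-1)I(X;X_\emptyset)$; the sum $S(\mathcal{K}):=\sum_{\mathcal{A}\subseteq\mathcal{K}}h(\mathcal{A})$ with $h(\mathcal{A}):=H(X_\mathcal{A}|X_{(2^\mathcal{A}-\{\mathcal{A}\})})$; and the joint conditional entropy $-H(X_{(2^\mathcal{K})}|X)$. The recurring technical tool is to fix a total order $\prec$ on $2^{\mathcal{I}_L}$ that extends strict set inclusion $\subsetneq$, so that whenever a joint conditional entropy $H(X_{(\mathcal{F})}\,|\,X,\ldots)$ is expanded by the chain rule along $\prec$, each single-letter term $H(X_\mathcal{A}|X,\ldots)$ has $X$ together with every $X_\mathcal{B}$, $\mathcal{B}\subsetneq\mathcal{A}$, in its conditioning, and is therefore bounded above by $h(\mathcal{A})$.

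For (ii), using the disjoint decomposition $2^{\mathcal{K}\cup\{i\}}=2^{\mathcal{K}}\sqcup\{\mathcal{A}'\cup\{i\}:\mathcal{A}'\subseteq\mathcal{K}\}$, I will write
\[
\psi(\mathcal{K}\cup\{i\})-\psi(\mathcal{K})=I(X;X_\emptyset)+\sum_{\mathcal{A}'\subseteq\mathcal{K}}h(\mathcal{A}'\cup\{i\})-H\bigl(\{X_{\mathcal{A}'\cup\{i\}}:\mathcal{A}'\subseteq\mathcal{K}\}\,\big|\,X, X_{(2^\mathcal{K})}\bigr),
\]
and then bound the final entropy term by $\sum_{\mathcal{A}'\subseteq\mathcal{K}}h(\mathcal{A}'\cup\{i\})$ via the chain-rule tool above, leaving an increment of at least $I(X;X_\emptyset)\ge 0$.

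The hard part is (iii), and the main obstacle is that $\mathcal{K}\mapsto H(X_{(2^\mathcal{K})}|X)$ is generally \emph{not} submodular in $\mathcal{K}$: the family $2^\mathcal{K}$ does not behave linearly under $\cup,\cap$, and in fact $2^{\mathcal{K}_1\cup\mathcal{K}_2}\supsetneq 2^{\mathcal{K}_1}\cup 2^{\mathcal{K}_2}$ whenever $\mathcal{K}_1,\mathcal{K}_2$ are incomparable. The plan is to show that this defect is \emph{exactly} matched by the strict supermodularity of $S$. Set $\mathcal{T}:=2^{\mathcal{K}_1\cup\mathcal{K}_2}\setminus(2^{\mathcal{K}_1}\cup 2^{\mathcal{K}_2})$, the collection of ``straddling'' subsets that contain elements from both $\mathcal{K}_1\setminus\mathcal{K}_2$ and $\mathcal{K}_2\setminus\mathcal{K}_1$. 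A four-case analysis on the indicator $\mathbf{1}[\mathcal{A}\subseteq\cdot]$ gives
\[
S(\mathcal{K}_1)+S(\mathcal{K}_2)-S(\mathcal{K}_1\cup\mathcal{K}_2)-S(\mathcal{K}_1\cap\mathcal{K}_2)=-\sum_{\mathcal{A}\in\mathcal{T}}h(\mathcal{A}).
\]
For the entropy piece, I will combine the identity $2^{\mathcal{K}_1}\cap 2^{\mathcal{K}_2}=2^{\mathcal{K}_1\cap\mathcal{K}_2}$ with the chain-rule splitting $H(X_{(2^{\mathcal{K}_1\cup\mathcal{K}_2})}|X)=H(X_{(2^{\mathcal{K}_1}\cup 2^{\mathcal{K}_2})}|X)+H(X_{(\mathcal{T})}\,|\,X,X_{(2^{\mathcal{K}_1}\cup 2^{\mathcal{K}_2})})$: ordinary submodularity of conditional entropy on the index family $2^{\mathcal{I}_L}$ handles the first summand, while the second summand is at most $\sum_{\mathcal{A}\in\mathcal{T}}h(\mathcal{A})$ by the chain-rule tool once more. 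This upper bound cancels the negative contribution from $S$, giving $\psi(\mathcal{K}_1)+\psi(\mathcal{K}_2)-\psi(\mathcal{K}_1\cup\mathcal{K}_2)-\psi(\mathcal{K}_1\cap\mathcal{K}_2)\le 0$, which is supermodularity.
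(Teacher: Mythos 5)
Your proposal is correct and follows essentially the same route as the paper's proof: both verify the normalized, nondecreasing, and supermodular properties of $\psi$ by chain-rule expansions in an order compatible with set inclusion, bounding each term $H(X_{\mathcal{A}}|X,\cdots)$ by $H(X_{\mathcal{A}}|X_{(2^{\mathcal{A}}-\{\mathcal{A}\})})$, and the collection you call $\mathcal{T}$ of straddling sets is exactly the paper's $2^{\mathcal{S}\cup\mathcal{T}}-(2^{\mathcal{S}}\cup 2^{\mathcal{T}})$. The only differences are organizational (single-element increments for monotonicity, an explicit counting identity for the sum of $h(\mathcal{A})$ terms, and naming submodularity of conditional entropy), not substantive.
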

\begin{proof}
See Appendix \ref{app:contra}.
\end{proof}

Note that the random variable $X_{\mathcal{I}_L}$ corresponds to the
final refinement layer in the VKG scheme. Now we proceed to show
that this refinement layer can be removed. Define the VKG* region
$\mathcal{RD}_{\text{VKG*}}$ as the VKG region with
$X_{\mathcal{I}_L}$ set to be a constant.
\begin{theorem}\label{th:VKG*}
$\mathcal{RD}_{\text{VKG*}}=\mathcal{RD}_{\text{VKG}}$.
\end{theorem}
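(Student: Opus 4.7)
The plan is to follow the template of Theorem~\ref{th:th1} closely. The inclusion $\mathcal{RD}_{\text{VKG*}}\subseteq\mathcal{RD}_{\text{VKG}}$ is immediate by setting $X_{\mathcal{I}_L}$ to a constant. For the reverse inclusion, fix a joint distribution of $(X,X_{(2^{\mathcal{I}_L})})$ together with functions $\phi_{\mathcal{K}}$ and freeze the resulting distortion coordinates. By Lemma~\ref{le:contra} the associated rate region is a contra-polymatroid, so, since $\mathcal{RD}_{\text{VKG*}}$ is convex and upward closed in the rate coordinates, it suffices to place each of the $L!$ greedy vertices $v_\pi$ into $\mathcal{RD}_{\text{VKG*}}$. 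I focus on the identity permutation, whose coordinates are $R_k=\psi(\{1,\ldots,k\})-\psi(\{1,\ldots,k-1\})$; the other vertices are handled identically after relabeling.

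To strip $X_{\mathcal{I}_L}$ from this vertex, I invoke Lemma~\ref{le:le1} with $U=X$, $V=X_{(2^{\mathcal{I}_L}-\{\mathcal{I}_L\})}$, and $W=X_{\mathcal{I}_L}$, obtaining a random variable $Z$ and a function $f$ satisfying $Z\perp V$, $X_{\mathcal{I}_L}=f(V,Z)$, and $X-(V,X_{\mathcal{I}_L})-Z$ Markov. Mirroring the 2-description construction, I absorb $Z$ into the ``last'' description: set $X'_{\{L\}}=(X_{\{L\}},Z)$, keep $X'_{\mathcal{A}}=X_{\mathcal{A}}$ for every other $\mathcal{A}\neq\mathcal{I}_L$, and take $X'_{\mathcal{I}_L}$ to be a constant. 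Because $X_{\mathcal{I}_L}=f(V,Z)$ is then recoverable from the primed variables, decoders $\phi'_{\mathcal{K}}$ of $X'_{(2^{\mathcal{K}})}$ that simulate the original $\phi_{\mathcal{K}}$ are readily available, so all distortion coordinates are preserved.

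The crux is then to verify the new rate constraints $R_{\mathcal{K}}\ge\psi'(\mathcal{K})$, and the cleanest route is to prove the stronger statement that $(R_1,\ldots,R_L)$ is itself the identity-permutation greedy vertex of the \emph{new} contra-polymatroid, whereupon every new constraint is automatic. For $\mathcal{K}$ not containing $L$, every $X'_{\mathcal{A}}$ with $\mathcal{A}\subseteq\mathcal{K}$ equals $X_{\mathcal{A}}$, hence $\psi'(\mathcal{K})=\psi(\mathcal{K})$ trivially. The only substantive identity is $\psi'(\mathcal{I}_L)=\psi(\mathcal{I}_L)$, and this is where I expect the main difficulty: expanding both sides leaves three residual contributions, namely a $+H(Z)$ from the $\mathcal{A}=\{L\}$ summand (thanks to $Z\perp(X_{\emptyset},X_{\{L\}})$), a $-H(Z|V,X_{\mathcal{I}_L})$ in $-H(X'_{(2^{\mathcal{I}_L})}|X)$ (via the Markov chain and $X_{\mathcal{I}_L}=f(V,Z)$), and a $-H(X_{\mathcal{I}_L}|V)$ from the now-vanishing $\mathcal{A}=\mathcal{I}_L$ summand, with all intermediate $\mathcal{A}\ni L$ summands unchanged because $Z\perp X_{(2^{\mathcal{A}})}$ whenever $\mathcal{A}\neq\mathcal{I}_L$. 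These three residues collapse via $H(Z)-H(Z|V,X_{\mathcal{I}_L})-H(X_{\mathcal{I}_L}|V)=I(Z;X_{\mathcal{I}_L}|V)-H(X_{\mathcal{I}_L}|V)=-H(X_{\mathcal{I}_L}|V,Z)=0$, using $Z\perp V$ and $X_{\mathcal{I}_L}=f(V,Z)$; with $\psi'(\mathcal{I}_L)=\psi(\mathcal{I}_L)$ in hand, the new greedy vertex coincides with $v_{\text{id}}$ and the proof is complete.
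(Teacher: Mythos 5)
Your proposal is correct and takes essentially the same route as the paper: reduce to the $L!$ contra-polymatroid vertices via Lemma \ref{le:contra}, apply Lemma \ref{le:le1} with $W=X_{\mathcal{I}_L}$ and $V$ the remaining auxiliaries, absorb $Z$ into $X_{\{L\}}$ while setting $X_{\mathcal{I}_L}$ to a constant, and conclude by showing the vertex coordinates and distortions are unchanged. The only difference is cosmetic: you verify $\psi'(\mathcal{I}_L)=\psi(\mathcal{I}_L)$ by directly cancelling the residual entropy terms, whereas the paper first rewrites $R_L(\pi)$ as a telescoping sum of conditional mutual informations (see (\ref{eq:mutualInfor})) and checks each term is invariant under the substitution.
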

\begin{proof}
The proof is given in Appendix
\ref{sec:VKG*}.
\end{proof}

A direct consequence of Theorem \ref{th:VKG*} is that the VKG region, when specialized to the 2-description case, is equivalent to the ZB region.
\begin{corollary}\label{cor:cor2}
For the 2-description
problem, $\mathcal{RD}_{\text{ZB}}=\mathcal{RD}_{\text{VKG}}$.
\end{corollary}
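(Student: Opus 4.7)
The plan is to reduce the corollary to a direct specialization of Theorem~\ref{th:VKG*} together with a bookkeeping verification that the VKG* region for $L=2$ coincides, term by term, with the ZB region. Theorem~\ref{th:VKG*} already tells us that we may set $X_{\mathcal{I}_L}=X_{\{1,2\}}$ to be a constant without loss of generality, so the only remaining auxiliary variables in the VKG description are $X_{\emptyset}$, $X_{\{1\}}$, and $X_{\{2\}}$, exactly the ones appearing in the ZB definition. Hence the proof reduces to showing that, under this specialization, the rate and distortion inequalities defining $\mathcal{RD}_{\text{VKG*}}$ collapse to those of $\mathcal{RD}_{\text{ZB}}$.

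For the single-description rate, I would compute $\psi(\{k\})$ for $\mathcal{K}=\{k\}$ directly from its definition. Since $(|\mathcal{K}|-1)I(X;X_{\emptyset})=0$ and $2^{\{k\}}=\{\emptyset,\{k\}\}$, the formula gives
\begin{align*}
\psi(\{k\}) = -H(X_{\emptyset},X_{\{k\}}\mid X) + H(X_{\emptyset}) + H(X_{\{k\}}\mid X_{\emptyset}) = I(X;X_{\emptyset},X_{\{k\}}),
\end{align*}
which matches the ZB single-rate bound. For the sum rate, $\mathcal{K}=\{1,2\}$, the constant $X_{\{1,2\}}$ makes the term $H(X_{\{1,2\}}\mid X_{\emptyset},X_{\{1\}},X_{\{2\}})$ vanish, so
\begin{align*}
\psi(\{1,2\}) &= I(X;X_{\emptyset}) - H(X_{\emptyset},X_{\{1\}},X_{\{2\}}\mid X) \\
&\quad + H(X_{\emptyset}) + H(X_{\{1\}}\mid X_{\emptyset}) + H(X_{\{2\}}\mid X_{\emptyset}).
\end{align*}
Grouping $H(X_{\emptyset}) + H(X_{\{1\}}\mid X_{\emptyset}) + H(X_{\{2\}}\mid X_{\emptyset}) = H(X_{\emptyset},X_{\{1\}},X_{\{2\}}) + I(X_{\{1\}};X_{\{2\}}\mid X_{\emptyset})$ and combining with the conditional-entropy term produces $I(X;X_{\emptyset},X_{\{1\}},X_{\{2\}}) + I(X_{\{1\}};X_{\{2\}}\mid X_{\emptyset})$; splitting the first mutual information via the chain rule yields exactly $2I(X;X_{\emptyset}) + I(X;X_{\{1\}},X_{\{2\}}\mid X_{\emptyset}) + I(X_{\{1\}};X_{\{2\}}\mid X_{\emptyset})$, the ZB sum-rate bound.

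The distortion constraints require no further work: for $\mathcal{K}=\{k\}$ one has $X_{(2^{\{k\}})}=(X_{\emptyset},X_{\{k\}})$, and for $\mathcal{K}=\{1,2\}$ the list $X_{(2^{\{1,2\}})}$ reduces to $(X_{\emptyset},X_{\{1\}},X_{\{2\}})$ after removing the constant component, so the reconstruction functions $\phi_{\mathcal{K}}$ take exactly the arguments prescribed by the ZB region. The main thing to be careful about is the proper handling of the distinction between $X_{\emptyset}$ and $X_{(\emptyset)}$ when $\mathcal{A}=\emptyset$ appears in the sum defining $\psi$; once this indexing is correctly interpreted, the algebra above is essentially mechanical. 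Combining this identification $\mathcal{RD}_{\text{VKG*}}=\mathcal{RD}_{\text{ZB}}$ (for $L=2$) with Theorem~\ref{th:VKG*} gives $\mathcal{RD}_{\text{ZB}}=\mathcal{RD}_{\text{VKG}}$, as claimed.
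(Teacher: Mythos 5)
Your proposal is correct and follows exactly the route the paper intends: Corollary~\ref{cor:cor2} is stated there as a direct specialization of Theorem~\ref{th:VKG*}, and your computation of $\psi(\{k\})=I(X;X_{\emptyset},X_{\{k\}})$ and $\psi(\{1,2\})=2I(X;X_{\emptyset})+I(X;X_{\{1\}},X_{\{2\}}\mid X_{\emptyset})+I(X_{\{1\}};X_{\{2\}}\mid X_{\emptyset})$ with $X_{\{1,2\}}$ constant is precisely the bookkeeping the paper leaves implicit. No gaps.
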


{\em Remark}: For the 2-description VKG region, the cardinality bound for $X_{\emptyset}$ can be derived by invoking the supporting lemma \cite{Csiszar81} while all the other auxiliary random variables can be assumed, with no loss of generality, to be defined on the reconstruction alphabet $\hat{\mathcal{X}}$. Therefore, one can deduce cardinality bounds for the auxiliary random variables of the ZB region by leveraging Corollary \ref{cor:cor2}.

We can see that for the VKG* region, the number of auxiliary random variables is exactly the same as the number of distortion constraints. Intuitively, the number of auxiliary random variables can be further reduced if we remove certain distortion constraints. Somewhat surprisingly, we shall show that in some cases the number of auxiliary random variables can be significantly less than the number of distortion constraints.

\begin{figure}[t]
   \begin{center}
      \epsfig{file=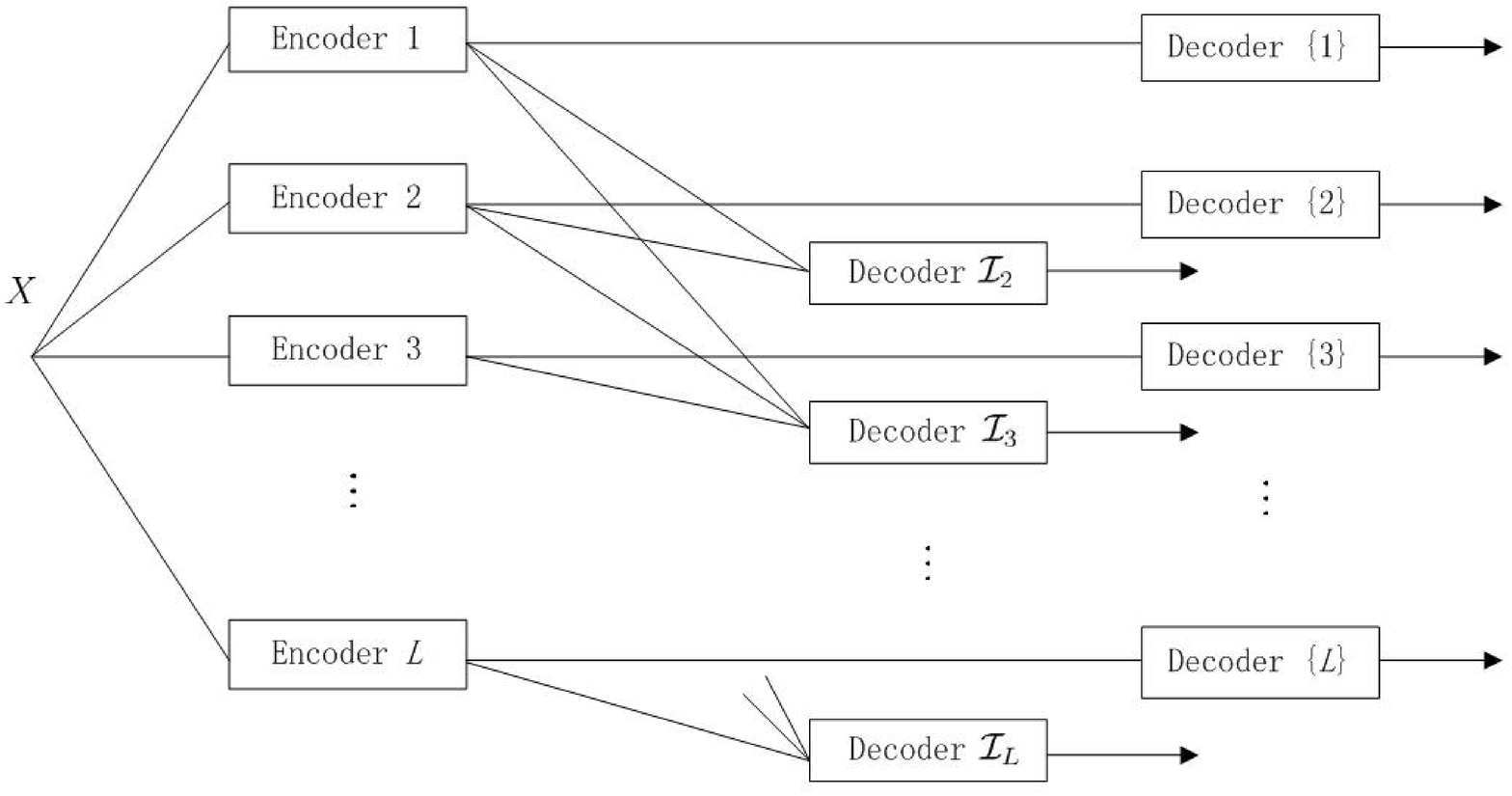, width = 4.5in}
      \caption{Multiple description coding with individual and hierachical distortion constraints.}
      \label{fig:fig_IH}
    \end{center}
\end{figure}

For any nonnegative integer $k$, define $\mathcal{H}_k=\emptyset$ if $k=0$, $\mathcal{H}_k=\{\{1\}\}$ if $k=1$, and $\mathcal{H}_k=\{\{1\},\cdots,\{k\},\mathcal{I}_2,\cdots,\mathcal{I}_k\}$ if $k\geq 2$. Multiple description coding with individual and hierachical distortion
constraints (see Fig. \ref{fig:fig_IH}) refers to the scenario where only the following distortion constraints: $D_{\mathcal{K}}$, $\mathcal{K}\in\mathcal{H}_L$, are imposed.
Specializing the VKG region to this setting, we can define the VKG region for multiple description coding with individual and hierachical distortion
constraints $\mathcal{RD}_{\text{IH-VKG}}$ as the set of rate-distortion tuples  $(R_1,\cdots,R_L,D_{\mathcal{K}},\mathcal{K}\in\mathcal{H}_L)$ for which there exist
auxiliary random variables $X_{\mathcal{K}}$, $\mathcal{K}\subseteq\mathcal{I}_L$, jointly distributed with
$X$,  and functions $\phi_{\mathcal{K}}$, $\mathcal{K}\in\mathcal{H}_L$, such that
\begin{align*}
&R_{\mathcal{K}}  \geq \psi(\mathcal{K}),\quad \emptyset\subset\mathcal{K}\subseteq\mathcal{I}_L,\\
&D_{\mathcal{K}} \geq \mathbb{E}[ {d_{\mathcal{K}}
(X,\phi_{\mathcal{K}}(X_{(2^{\mathcal{K}})}) )}],\quad \mathcal{K}\in\mathcal{H}_L.
\end{align*}
Define $\mathcal{R}_{\text{IH-VKG}}(D_{\mathcal{K}},\mathcal{K}\in\mathcal{H}_L)=\{(R_1,\cdots,R_L):(R_1,\cdots,R_L,D_{\mathcal{K}},\mathcal{K}\in\mathcal{H}_L)\in\mathcal{RD}_{\text{IH-VKG}}\}$.
It is observed in \cite{Chen09} that for the quadratic Gaussian case, the number of auxiliary random variables can be significantly reduced when only certain supporting hyperplanes of $\mathcal{R}_{\text{IH-VKG}}(D_{\mathcal{K}},\mathcal{K}\in\mathcal{H}_L)$ are concerned. We shall show that this phenomenon is not restricted to the quadratic Gaussian case.

\begin{theorem}\label{th:MDindividual}
For any $\alpha_1\geq\cdots\alpha_L\geq 0$, we have
\begin{align}
&\min\limits_{(R_1,\cdots,R_L)\in\mathcal{R}_{\text{IH-VKG}}(D_{\mathcal{K}},\mathcal{K}\in\mathcal{H}_L)}\sum\limits_{k=1}^L\alpha_kR_k\nonumber\\
&=\min\limits_{p_{X_{\emptyset}X_{\{1\}}\cdots X_{\{L\}}|X},\phi_{\mathcal{K}},\mathcal{K}\in\mathcal{H}_L}\sum\limits_{k=1}^L\alpha_k[I(X;X_{\emptyset})+I(X,\{X_{\{i\}}\}_{i=1}^{k-1};X_{\{k\}}|X_{\emptyset})],\label{eq:min}
\end{align}
where the minimization in (\ref{eq:min}) is over $p_{X_{\emptyset}X_{\{1\}}\cdots X_{\{L\}}|X}$, and $\phi_{\mathcal{K}}$, $\mathcal{K}\in\mathcal{H}_L$, subject to the constraints
\begin{align*}
&D_{\{k\}}\geq\mathbb{E}[d(X,\phi_{\{k\}}(X_{\emptyset},X_{\{k\}}))],\quad k\in\mathcal{I}_L,\\
&D_{\mathcal{I}_{k}}\geq\mathbb{E}[d(X,\phi_{\mathcal{I}_k}(X_{\emptyset},X_{\{1\}},\cdots,X_{\{k\}}))],\quad k\in\mathcal{I}_L-\{1\}.
\end{align*}
\end{theorem}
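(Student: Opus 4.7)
The plan is to exploit the contra-polymatroid structure of the rate region so that the weighted sum rate collapses to $\sum_k (\alpha_k - \alpha_{k+1}) \psi(\mathcal{I}_k)$, and then to establish the two inequalities in (\ref{eq:min}) by separately analyzing a specialized subclass of auxiliaries and a Lemma \ref{le:le1}-based reduction. By Lemma \ref{le:contra}, for each fixed joint distribution of $(X, X_{(2^{\mathcal{I}_L})})$ the rate region is a contra-polymatroid with rank function $\psi$, so Abel summation $\sum_{k=1}^L \alpha_k R_k = \sum_{k=1}^L (\alpha_k - \alpha_{k+1}) R_{\mathcal{I}_k}$ (with $\alpha_{L+1} := 0$) yields
\[ \min \sum_{k=1}^L \alpha_k R_k \;=\; \sum_{k=1}^L (\alpha_k - \alpha_{k+1}) \psi(\mathcal{I}_k), \]
attained at the greedy corner $R_k^\star = \psi(\mathcal{I}_k) - \psi(\mathcal{I}_{k-1})$ with $\psi(\emptyset) = 0$. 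Thus the LHS of (\ref{eq:min}) is the minimum of this quantity over all feasible $(\{X_\mathcal{A}\}, \{\phi_\mathcal{K}\})$.

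For the inequality $\textnormal{LHS} \leq \textnormal{RHS}$, note that the singleton-only choices in the RHS minimization form a subclass of the general auxiliaries, obtained by setting $X_\mathcal{A}$ to a constant whenever $|\mathcal{A}| \geq 2$. A direct chain-rule calculation gives, in this specialized case,
\[ \psi(\mathcal{I}_k) \;=\; \sum_{i=1}^k \Big[ I(X; X_\emptyset) + I\big(X, X_{\{1\}}, \ldots, X_{\{i-1\}}; X_{\{i\}} \,\big|\, X_\emptyset\big) \Big], \]
and a second Abel summation reassembles $\sum_k (\alpha_k - \alpha_{k+1}) \psi(\mathcal{I}_k)$ into the right-hand side of (\ref{eq:min}); the restricted minimum is therefore at least the LHS.

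For the harder direction $\textnormal{LHS} \geq \textnormal{RHS}$, I aim to convert any feasible general choice $(\{X_\mathcal{A}\}, \{\phi_\mathcal{K}\})$ into a feasible singleton-only choice $(\{\tilde X_\emptyset, \tilde X_{\{1\}}, \ldots, \tilde X_{\{L\}}\}, \{\tilde \phi_\mathcal{K}\})$ satisfying $\tilde \psi(\mathcal{I}_j) \leq \psi(\mathcal{I}_j)$ for every $j$; Abel summation and the monotonicity of the weights then give $\sum_k \alpha_k \tilde R_k^\star \leq \sum_k \alpha_k R_k^\star$. Iteratively for $k = 1, \ldots, L$, I apply Lemma \ref{le:le1} with $V_k = (\tilde X_\emptyset, \tilde X_{\{1\}}, \ldots, \tilde X_{\{k-1\}}, X_{\{k\}})$, $W_k = \{X_\mathcal{A} : k \in \mathcal{A} \subseteq \mathcal{I}_k,\ |\mathcal{A}| \geq 2\}$, and $U_k = X$, producing $Z_k$ independent of $V_k$ together with a function $g_k$ such that $W_k = g_k(V_k, Z_k)$ and $X - (V_k, W_k) - Z_k$. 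Setting $\tilde X_\emptyset := X_\emptyset$ and $\tilde X_{\{k\}} := (X_{\{k\}}, Z_k)$, an induction on $k$ shows that $(\tilde X_\emptyset, \tilde X_{\{1\}}, \ldots, \tilde X_{\{k\}})$ determines every $X_\mathcal{B}$ with $\mathcal{B} \subseteq \mathcal{I}_k$, so composing each $\phi_\mathcal{K}$ with this recursive reconstruction defines $\tilde \phi_\mathcal{K}$ and preserves the distortion constraints.

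The main obstacle is verifying $\tilde \psi(\mathcal{I}_j) \leq \psi(\mathcal{I}_j)$. Expanding both sides and using that $(\tilde X_\emptyset, \tilde X_{\{1\}}, \ldots, \tilde X_{\{j\}})$ determines $X_{(2^{\mathcal{I}_j})}$ yields
\[ \tilde \psi(\mathcal{I}_j) - \psi(\mathcal{I}_j) \;=\; -H\big(Z_1, \ldots, Z_j \,\big|\, X, X_{(2^{\mathcal{I}_j})}\big) + \sum_{k=1}^j \Big[ H(\tilde X_{\{k\}} \,|\, X_\emptyset) - S_k \Big], \]
where $S_k = \sum_{\mathcal{A} : k \in \mathcal{A} \subseteq \mathcal{I}_k} H(X_\mathcal{A} \,|\, X_{(2^\mathcal{A} - \{\mathcal{A}\})})$; the first term is manifestly nonpositive. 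For the second, the independence $Z_k \perp V_k \supseteq (X_\emptyset, X_{\{k\}})$ gives $H(\tilde X_{\{k\}} \,|\, X_\emptyset) = H(X_{\{k\}} \,|\, X_\emptyset) + H(Z_k)$, so the bracket reduces to $H(Z_k) - \sum_{\mathcal{A} \in W_k} H(X_\mathcal{A} \,|\, X_{(2^\mathcal{A}-\{\mathcal{A}\})})$. The delicate step is arranging $H(Z_k) = H(W_k \,|\, V_k)$ — equivalently, choosing $Z_k$ as a deterministic residual of $W_k$ modulo $V_k$, a functional-representation refinement of Lemma \ref{le:le1} — after which a chain-rule bound that orders the elements of $W_k$ by increasing cardinality delivers $H(W_k \,|\, V_k) \leq \sum_{\mathcal{A} \in W_k} H(X_\mathcal{A} \,|\, X_{(2^\mathcal{A}-\{\mathcal{A}\})})$, closing the argument.
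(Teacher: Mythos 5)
Your overall skeleton matches the paper's: use Lemma \ref{le:contra} and the greedy (identity-permutation) vertex so that the weighted sum collapses to $\sum_k(\alpha_k-\alpha_{k+1})\psi(\mathcal{I}_k)$, check that singleton-only auxiliaries give exactly the right-hand-side expression, and then eliminate the non-singleton auxiliaries via Lemma \ref{le:le1}, absorbing each $Z_k$ into $X_{\{k\}}$ while preserving the $\mathcal{H}_L$-distortions because the enlarged singletons determine the eliminated variables. The problem is the step you yourself flag as delicate: there is no ``functional-representation refinement'' of Lemma \ref{le:le1} giving $H(Z_k)=H(W_k\,|\,V_k)$. For any $Z$ independent of $V$ with $W=f(V,Z)$ one has $H(W\,|\,V=v)=H(f(v,Z))\le H(Z)$ for every $v$, hence $H(Z)\ge\max_v H(W\,|\,V=v)$, which strictly exceeds $H(W\,|\,V)$ whenever $H(W\,|\,V=v)$ varies over the support of $V$ (e.g., $W$ a function of $V$ for some $v$ but not for others). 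So the bracket $H(Z_k)-\sum_{\mathcal{A}\in W_k}H(X_{\mathcal{A}}\,|\,X_{(2^{\mathcal{A}}-\{\mathcal{A}\})})$ cannot be forced nonpositive this way, and your comparison $\tilde\psi(\mathcal{I}_j)\le\psi(\mathcal{I}_j)$ — the heart of the converse direction — is left unproven. A secondary issue: taking $U_k=X$ only in Lemma \ref{le:le1} leaves the joint law of $Z_k$ with the not-yet-eliminated variables unspecified, which you need both at later iterations and inside $H(Z_1,\cdots,Z_j\,|\,X,X_{(2^{\mathcal{I}_j})})$.

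The gap is repairable inside your own bookkeeping, and the repair is to use the very term you discard as ``manifestly nonpositive.'' Apply Lemma \ref{le:le1} with $U_k$ equal to \emph{all} variables outside $(V_k,W_k)$ (including $X$), so that $Z_k$ is conditionally independent of everything else given $(V_k,W_k)$; then $\sum_{k=1}^j H(Z_k)-H(Z_1,\cdots,Z_j\,|\,X,X_{(2^{\mathcal{I}_j})})=\sum_{k=1}^j I(Z_k;X,X_{(2^{\mathcal{I}_j})},Z_1,\cdots,Z_{k-1})=\sum_{k=1}^j I(Z_k;V_k,W_k)=\sum_{k=1}^j H(W_k\,|\,V_k)$, and your cardinality-ordered chain-rule bound $H(W_k\,|\,V_k)\le\sum_{\mathcal{A}\in W_k}H(X_{\mathcal{A}}\,|\,X_{(2^{\mathcal{A}}-\{\mathcal{A}\})})$ then closes the argument without any exotic strengthening of Lemma \ref{le:le1}. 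The paper sidesteps the issue differently: it never compares rank functions through unconditional entropies of $Z$, but rewrites each vertex rate as a sum of conditional mutual informations (as in (\ref{eq:mutualInfor})) and checks that each substitution preserves those quantities exactly, eliminating $X_{\mathcal{I}_L}$ and then the remaining non-singleton variables one at a time.
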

\begin{proof}
The proof of Theorem \ref{th:MDindividual} is given in Appendix
\ref{sec:MDindividual}.
\end{proof}

\begin{corollary}\label{cor:cor3}
For any $\alpha_1\geq\cdots\alpha_L\geq 0$, we have
\begin{align}
&\min\limits_{(R_1,\cdots,R_L)\in\mathcal{R}_{\text{IH-VKG}}(D_{\mathcal{K}},\mathcal{K}\in\mathcal{H}_L)}\sum\limits_{k=1}^L\alpha_kR_k\nonumber\\
&=\min\limits_{p_{X_{\emptyset}X_{(\mathcal{H}_L)}|X}}\sum\limits_{k=1}^L\alpha_k[I(X;X_{\emptyset})+I(X_{(\mathcal{H}_{k-1})};X_{\{k\}}|X_{\emptyset})+I(X;X_{\{k\}},X_{\mathcal{I}_k}|X_{\emptyset},X_{(\mathcal{H}_{k-1})})],\label{eq:min2}
\end{align}
where the minimization in (\ref{eq:min2}) is over $p_{X_{\emptyset}X_{(\mathcal{H}_L)}|X}$ subject to the constraints
\begin{align*}
&D_{\mathcal{K}}\geq\mathbb{E}[d(X,X_{\mathcal{K}})],\quad \mathcal{K}\in\mathcal{H}_L.
\end{align*}
\end{corollary}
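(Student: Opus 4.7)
Theorem~\ref{th:MDindividual} identifies the LHS of~\eqref{eq:min2} with the quantity on the RHS of~\eqref{eq:min} (call this the Thm-min), so it suffices to show that the RHS of~\eqref{eq:min2} (call this the Cor-min) also equals Thm-min. I treat the two inequalities separately.

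For Thm-min $\le$ Cor-min, I take any Cor-feasible tuple $(X_{\emptyset},X_{(\mathcal{H}_L)})$ and, for $k=2,\ldots,L$ in order, apply Lemma~\ref{le:le1} with $U=X$, $V_k=(X_{\emptyset},X_{(\mathcal{H}_{k-1})},X_{\{k\}},Z_2,\ldots,Z_{k-1})$, and $W=X_{\mathcal{I}_k}$. At each step this produces an auxiliary $Z_k$ independent of $V_k$ together with a deterministic function $f_k$ satisfying $X_{\mathcal{I}_k}=f_k(V_k,Z_k)$ and $X-(V_k,X_{\mathcal{I}_k})-Z_k$. Enlarging $V_k$ to include the previously constructed $Z_j$'s guarantees that the $Z_k$'s are jointly independent of each other and of the original Cor variables. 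Setting $X_{\{1\}}^{\mathrm{thm}}:=X_{\{1\}}$ and $X_{\{k\}}^{\mathrm{thm}}:=(X_{\{k\}},Z_k)$ for $k\ge 2$, with reconstruction functions obtained by projections and compositions with the $f_k$'s, the distortion constraints of the Thm formulation transfer automatically. The key identity
\[
I(X;Z_k\mid X_{\emptyset},X_{(\mathcal{H}_{k-1})},X_{\{k\}})=I(X;X_{\mathcal{I}_k}\mid X_{\emptyset},X_{(\mathcal{H}_{k-1})},X_{\{k\}}),
\]
obtained by expanding $I(X;Z_k,X_{\mathcal{I}_k}\mid V_k)$ via the chain rule in two ways (the Markov chain gives $I(X;Z_k\mid V_k,X_{\mathcal{I}_k})=0$, while the functional relation gives $I(X;X_{\mathcal{I}_k}\mid V_k,Z_k)=0$), lets me rewrite $R_k^{\mathrm{thm}}$ at the constructed point exactly as $R_k^{\mathrm{cor}}$.

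For Cor-min $\le$ Thm-min, I start from any Thm-feasible tuple and set $X_{\emptyset}^{\mathrm{cor}}:=X_{\emptyset}$, $X_{\{k\}}^{\mathrm{cor}}:=\phi_{\{k\}}(X_{\emptyset},X_{\{k\}}^{\mathrm{thm}})$, and $X_{\mathcal{I}_k}^{\mathrm{cor}}:=\phi_{\mathcal{I}_k}(X_{\emptyset},X_{\{1\}}^{\mathrm{thm}},\ldots,X_{\{k\}}^{\mathrm{thm}})$. The Cor distortion constraints hold automatically. Iterated chain-rule expansions combined with the data-processing inequality produce per-layer bounds of the form $R_k^{\mathrm{cor}}-R_k^{\mathrm{thm}}\le\sum_{j<k}\delta_j$, where $\delta_j:=I(X;X_{\{j\}}^{\mathrm{thm}}\mid X_{\emptyset},X_{\{j\}}^{\mathrm{cor}})\ge 0$ measures the information retained in the auxiliary $X_{\{j\}}^{\mathrm{thm}}$ beyond its Cor reconstruction, while the own-layer identity $R_j^{\mathrm{thm}}-R_j^{\mathrm{cor}}\ge\delta_j$ holds. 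Combining these via an Abel-summation rearrangement converts the weighted-sum inequality into a comparison of partial sums; the monotonicity $\alpha_1\ge\cdots\ge\alpha_L\ge 0$ is exactly the hypothesis under which the extra cost paid at later layers is dominated by the savings at earlier, more heavily weighted layers, giving $\sum_k\alpha_k R_k^{\mathrm{cor}}\le\sum_k\alpha_k R_k^{\mathrm{thm}}$.

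The main obstacle is direction 2: for general $L>2$ the bound on $R_k^{\mathrm{cor}}-R_k^{\mathrm{thm}}$ involves a growing collection of nested conditional mutual informations, and verifying the precise telescoping structure that closes the inequality under the monotonicity hypothesis requires careful, inductive chain-rule bookkeeping. Direction 1 is comparatively more transparent once the substitution-lemma identity is in hand, though one must respect the joint-independence structure of the $Z_k$'s across layers, which is why the conditioning set $V_k$ grows with $k$ when Lemma~\ref{le:le1} is applied iteratively.
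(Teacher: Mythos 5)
Your overall architecture is right, and your first direction (constructing a Theorem-feasible tuple from a Corollary-feasible one by iterating Lemma \ref{le:le1} and absorbing each $Z_k$ into $X_{\{k\}}$) is essentially the paper's argument for the inequality $\eqref{eq:min2}\geq\eqref{eq:min}$. Two remarks on it: the paper eliminates $X_{\mathcal{I}_L},X_{\mathcal{I}_{L-1}},\dots,X_{\mathcal{I}_2}$ top-down rather than bottom-up, which is not cosmetic — with your increasing-$k$ order, taking $U=X$ in Lemma \ref{le:le1} is not enough, because the not-yet-eliminated variables $X_{\{j\}},X_{\mathcal{I}_j}$ with $j>k$ appear both as arguments and in the conditioning of later terms, so you must put all of them into $U$ (equivalently, require the Markov chain $(X,\{X_{\{j\}},X_{\mathcal{I}_j}\}_{j>k})-(V_k,X_{\mathcal{I}_k})-Z_k$) to justify replacing $X_{\mathcal{I}_k}$ by $Z_k$ inside those later mutual informations. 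This is fixable bookkeeping, but as written it is a hole.

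The genuine gap is your second direction, and you correctly identify it as the obstacle — but the repair you sketch does not work. With per-layer bounds $R_k^{\mathrm{cor}}-R_k^{\mathrm{thm}}\leq\sum_{j<k}\delta_j$ and slacks $R_j^{\mathrm{thm}}-R_j^{\mathrm{cor}}\geq\delta_j$, the best you can conclude is $\sum_k\alpha_k(R_k^{\mathrm{thm}}-R_k^{\mathrm{cor}})\geq\sum_j\bigl(\alpha_j-\sum_{k>j}\alpha_k\bigr)\delta_j$, and the coefficients $\alpha_j-\sum_{k>j}\alpha_k$ are negative already for equal weights, so the telescoping does not close; each $\delta_j$ is charged once per later layer but credited only once. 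The paper avoids per-layer comparison entirely: it expands $I(X,\{X_{\{i\}}\}_{i<k};X_{\{k\}}|X_{\emptyset})=I(\{X_{\{i\}}\}_{i<k};X_{\{k\}}|X_{\emptyset})+I(X;X_{\{k\}}|X_{\emptyset},\{X_{\{i\}}\}_{i<k})$, applies Abel summation to rewrite the whole objective as
\begin{align*}
\sum_{k=2}^L\alpha_k\bigl[I(X;X_{\emptyset})+I(\{X_{\{i\}}\}_{i=1}^{k-1};X_{\{k\}}|X_{\emptyset})\bigr]+\sum_{k=1}^L(\alpha_k-\alpha_{k+1})I(X;X_{\emptyset},\{X_{\{i\}}\}_{i=1}^{k})
\end{align*}
with $\alpha_{L+1}=0$, notes that every coefficient is nonnegative precisely because $\alpha_1\geq\cdots\geq\alpha_L\geq0$, applies data processing term by term to replace $X_{\{k\}}$ by $X'_{\{k\}}=\phi_{\{k\}}(X_{\emptyset},X_{\{k\}})$ and $\{X_{\{i\}}\}_{i<k}$ by $X'_{(\mathcal{H}_{k-1})}$ (all functions of the conditioning-side variables), and then reverses the Abel summation to land exactly on the objective of \eqref{eq:min2}. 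You should replace your $\delta_j$-based telescoping with this global rearrangement; no induction on layers is needed.
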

\begin{proof}
See Appendix \ref{app:cor3}.
\end{proof}

{\em Remark}: It should be noted that $X_{\mathcal{K}}$, $\mathcal{K}\in\mathcal{H}_L$, in (\ref{eq:min2}) are defined on the reconstruction alphabet $\hat{\mathcal{X}}$; moreover, for $X_{\emptyset}$ in (\ref{eq:min2}), the cardinality bound can be easily derived by invoking the support lemma \cite{Csiszar81}. In view of the proof of Corollary \ref{cor:cor3}, one can derive cardinality bounds for the auxiliary random variables in (\ref{eq:min}) by leveraging the cardinality bounds for the auxiliary random variables in (\ref{eq:min2}). This explains why ``$\min$" instead of ``$\inf$" is used in (\ref{eq:min}).

A special case of multiple description coding with individual and hierachical distortion
constraints  is called multiple description coding with individual and central distortion
constraints \cite{Chen09,Wang07}, where only the individual distortion constraints $D_{\{k\}}$, $k\in\mathcal{I}_L$, and the central distortion constraint $D_{\mathcal{I}_L}$ are imposed.
Let $\mathcal{G}_L=\{\{1\},\cdots,\{L\},\mathcal{I}_L\}$. We can define the VKG region for multiple description coding with individual and central distortion
constraints $\mathcal{RD}_{\text{IC-VKG}}$ as the set of rate-distortion tuples  $(R_1,\cdots,R_L,D_{\mathcal{K}},\mathcal{K}\in\mathcal{G}_L)$ for which there exist
auxiliary random variables $X_{\mathcal{K}}$, $\mathcal{K}\subseteq\mathcal{I}_L$, jointly distributed with
$X$,  and functions $\phi_{\mathcal{K}}$, $\mathcal{K}\in\mathcal{G}_L$, such that
\begin{align*}
&R_{\mathcal{K}}  \geq \psi(\mathcal{K}),\quad \emptyset\subset\mathcal{K}\subseteq\mathcal{I}_L,\\
&D_{\mathcal{K}} \geq \mathbb{E}[ {d_{\mathcal{K}}
(X,\phi_{\mathcal{K}}(X_{(2^{\mathcal{K}})}) )}],\quad \mathcal{K}\in\mathcal{G}_L.
\end{align*}
Define $\mathcal{R}_{\text{IC-VKG}}(D_{\mathcal{K}},\mathcal{K}\in\mathcal{G}_L)=\{(R_1,\cdots,R_L):(R_1,\cdots,R_L,D_{\mathcal{K}},\mathcal{K}\in\mathcal{G}_L)\in\mathcal{RD}_{\text{IC-VKG}}\}$.
The following result is a simple consequence of Theorem \ref{th:MDindividual} and Corollary \ref{cor:cor3}.

\begin{corollary}
$\mathcal{RD}_{\text{IC-VKG}}$ is equivalent to
the set of rate-distortion tuples  $(R_1,\cdots,R_L,D_{\mathcal{K}},\mathcal{K}\in\mathcal{G}_L)$ for which there exist auxiliary
random variables $X_{\emptyset}$, $X_{\{k\}}$, $k\in\mathcal{I}_L$, jointly distributed with
$X$,  and functions $\phi_{\mathcal{K}}$, $\mathcal{K}\in\mathcal{G}_L$, such that
\begin{align*}
&R_{\mathcal{K}}  \geq | \mathcal{K} | I(X;X_{\emptyset} ) - H(\{X_{\{k\}}\}_{k\in\mathcal{K}}|X,X_{\emptyset}) +\sum\limits_{k\in\mathcal{K}}H(X_{\{k\}}|X_{\emptyset}),\quad \emptyset\subset\mathcal{K}\subseteq\mathcal{I}_L,\\
&D_{\{k\}} \geq \mathbb{E}[ {d(X,\phi_{\{k\}}(X_{\emptyset},X_{\{k\}}))}],\quad k\in\mathcal{I}_L,\\
&D_{\mathcal{I}_L}\geq\mathbb{E}[d(X,\phi_{\mathcal{I}_L}(X_{\emptyset},X_{\{1\}},\cdots,X_{\{L\}}))].
\end{align*}
$\mathcal{RD}_{\text{IC-VKG}}$ is also equivalent to
the set of rate-distortion tuples  $(R_1,\cdots,R_L,D_{\mathcal{K}},\mathcal{K}\in\mathcal{G}_L)$ for which there exist auxiliary
random variables $X_{\emptyset}$, $X_{\mathcal{K}}$, $\mathcal{K}\in\mathcal{G}_L$, jointly distributed with
$X$,  and functions $\phi_{\mathcal{K}}$, $\mathcal{K}\in\mathcal{G}_L$, such that
\begin{align*}
&R_{\mathcal{K}} \geq | \mathcal{K} | I(X;X_{\emptyset} ) - H(\{X_{\{k\}}\}_{k\in\mathcal{K}}|X,X_{\emptyset}) +\sum\limits_{k\in\mathcal{K}}H(X_{\{k\}}|X_{\emptyset}),\quad \emptyset\subset\mathcal{K}\subset\mathcal{I}_L,\\
&R_{\mathcal{I}_L}\geq LI(X;X_{\emptyset} ) - H(\{X_{\{k\}}\}_{k\in\mathcal{I}_L}|X,X_{\emptyset}) +\sum\limits_{k=1}^LH(X_{\{k\}}|X_{\emptyset})+I(X;X_{\mathcal{I}_L}|X_{\emptyset},\{X_{\{k\}}\}_{k\in\mathcal{I}_L}),\\
&D_{\mathcal{K}} \geq \mathbb{E}[ {d(X,X_{\mathcal{K}})}],\quad \mathcal{K}\in\mathcal{G}_L.
\end{align*}
Moreover, for any $(\alpha_1,\cdots,\alpha_L)\in\mathbb{R}^L_+$, let $\pi$ be a permutation on $\mathcal{I}_L$ such that $\alpha_{\pi(1)}\geq\cdots\geq\alpha_{\pi(L)}$; we have
\begin{align}
&\min\limits_{(R_1,\cdots,R_L)\in\mathcal{R}_{\text{IC-VKG}}(D_{\mathcal{K}},\mathcal{K}\in\mathcal{G}_L)}\sum\limits_{k=1}^L\alpha_kR_k\nonumber\\
&=\min\limits_{p_{X_{\emptyset}X_{\{1\}}\cdots X_{\{L\}}|X},\phi_{\mathcal{K}},\mathcal{K}\in\mathcal{G}_L}\sum\limits_{k=1}^L\alpha_{\pi(k)}[I(X;X_{\emptyset})+I(X,\{X_{\pi(i)}\}_{i=1}^{k-1};X_{\{\pi(k)\}}|X_{\emptyset})]\label{eq:icmin}\\
&=\min\limits_{p_{X_{\emptyset}X_{(\mathcal{G}_L)}|X}}\sum\limits_{k=1}^L\alpha_{\pi(k)}[I(X;X_{\emptyset})+I(X,\{X_{\pi(i)}\}_{i=1}^{k-1};X_{\{\pi(k)\}}|X_{\emptyset})]+\alpha_{\pi(L)}I(X;X_{\mathcal{I}_L}|X_{\emptyset},\{X_{\{k\}}\}_{k\in\mathcal{I}_L}),\label{eq:icmin2}
\end{align}
where the minimization in (\ref{eq:icmin}) is over $p_{X_{\emptyset}X_{\{1\}}\cdots X_{\{L\}}|X}$, and $\phi_{\mathcal{K}}$, $\mathcal{K}\in\mathcal{G}_L$, subject to the constraints
\begin{align*}
&D_{\{k\}}\geq\mathbb{E}[d(X,\phi_{\{k\}}(X_{\emptyset},X_{\{k\}}))],\quad k\in\mathcal{I}_L,\\
&D_{\mathcal{I}_L}\geq\mathbb{E}[d(X,\phi_{\mathcal{I}_L}(X_{\emptyset},X_{\{1\}},\cdots,X_{\{L\}}))],
\end{align*}
while the minimization in (\ref{eq:icmin2}) is over $p_{X_{\emptyset}X_{(\mathcal{G}_L)}|X}$ subject to the constraints
\begin{align*}
&D_{\mathcal{K}}\geq\mathbb{E}[d(X,X_{\mathcal{K}})],\quad \mathcal{K}\in\mathcal{G}_L.
\end{align*}
\end{corollary}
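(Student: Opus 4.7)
The plan is to establish the weighted sum rate formulas first via Theorem \ref{th:MDindividual} and Corollary \ref{cor:cor3}, and then deduce the two equivalent forms of $\mathcal{RD}_{\text{IC-VKG}}$ by comparing support functions.

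For the weighted sum formulas (\ref{eq:icmin}) and (\ref{eq:icmin2}), I would first reduce to the nonincreasing-weight case by relabelling the description indices through the permutation $\pi$ chosen so that $\alpha_{\pi(1)}\ge\cdots\ge\alpha_{\pi(L)}$. This relabelling is admissible because both the VKG rate expressions and the IC distortion constraints are symmetric in the description indices. After relabelling, the IC problem coincides with the IH problem from Theorem \ref{th:MDindividual} with the hierarchical constraints $D_{\mathcal{I}_k}$ for $2\le k\le L-1$ made vacuous, so the functions $\phi_{\mathcal{I}_k}$ drop out of the minimisation and Theorem \ref{th:MDindividual} yields (\ref{eq:icmin}) directly. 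Applying Corollary \ref{cor:cor3} analogously and setting each $X_{\mathcal{I}_k}$ ($2\le k\le L-1$) to a constant collapses the corresponding $I(X;X_{\{k\}},X_{\mathcal{I}_k}|\cdots)$ terms to $I(X;X_{\{k\}}|\cdots)$, while the retained $X_{\mathcal{I}_L}$ contributes the advertised extra term $\alpha_{\pi(L)}I(X;X_{\mathcal{I}_L}|X_\emptyset,\{X_{\{k\}}\}_{k\in\mathcal{I}_L})$, giving (\ref{eq:icmin2}).

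For the two equivalent forms of $\mathcal{RD}_{\text{IC-VKG}}$, I would invoke the support-function characterisation of closed convex upward-closed sets. For fixed distortions $(D_\mathcal{K},\mathcal{K}\in\mathcal{G}_L)$, both $\mathcal{RD}_{\text{IC-VKG}}$ and each simplified form are closed, convex, and upward-closed in the rate orthant, so it suffices to show that they agree on every support value $\min\sum\alpha_kR_k$ with $(\alpha_1,\ldots,\alpha_L)\in\mathbb{R}^L_+$. For $\mathcal{RD}_{\text{IC-VKG}}$ this support value is (\ref{eq:icmin}) (equivalently (\ref{eq:icmin2})) as just established. For the first simplified form, the rate region $\{R_\mathcal{K}\ge\psi_{\text{simple}}(\mathcal{K})\}$ is a contra-polymatroid for each joint distribution (verifiable directly or by specialising Lemma \ref{le:contra}), so the minimum weighted sum is attained at the vertex corresponding to the ordering $\pi(1),\ldots,\pi(L)$, and the chain-rule telescoping of the increments $\psi_{\text{simple}}(\mathcal{K}_k)-\psi_{\text{simple}}(\mathcal{K}_{k-1})=I(X;X_\emptyset)+I(X,\{X_{\pi(i)}\}_{i=1}^{k-1};X_{\{\pi(k)\}}|X_\emptyset)$ recovers (\ref{eq:icmin}) exactly. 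The same argument for the second simplified form uses (\ref{eq:icmin2}) and exhibits the additional refinement cost produced by $X_{\mathcal{I}_L}$. Thus the support functions agree and the region equivalences follow.

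The main obstacle will be verifying that the simplified-form rate regions are indeed closed, convex, and upward-closed---and in particular that the first simplified form's rate region for fixed joint distribution is a contra-polymatroid in the singleton auxiliaries $X_\emptyset,X_{\{k\}}$. This is the analogue of Lemma \ref{le:contra} and follows from a parallel submodularity check on $\psi_{\text{simple}}$. Once this structural fact is in place, the three assertions of the corollary fall out essentially automatically from Theorem \ref{th:MDindividual}, Corollary \ref{cor:cor3}, and the support-function characterisation.
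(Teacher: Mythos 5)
Your derivation of the weighted-sum formulas is sound and matches what the paper intends (the paper states the corollary as a direct consequence of Theorem \ref{th:MDindividual} and Corollary \ref{cor:cor3} without writing out details): by the symmetry of the individual-and-central distortion constraints you may relabel the descriptions through $\pi$, and the IC setting is exactly the IH setting with the intermediate constraints $D_{\mathcal{I}_2},\cdots,D_{\mathcal{I}_{L-1}}$ made vacuous, so (\ref{eq:icmin}) and (\ref{eq:icmin2}) do fall out of Theorem \ref{th:MDindividual} and Corollary \ref{cor:cor3} (for (\ref{eq:icmin2}) you still owe the two-sided comparison between the restricted minimization over $p_{X_{\emptyset}X_{(\mathcal{G}_L)}|X}$ and the unrestricted one, but this is the same computation as in the proof of Corollary \ref{cor:cor3}). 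Your identification of the two simplified rate expressions as $\psi$ evaluated with the non-singleton auxiliaries (respectively, all except $X_{\mathcal{I}_L}$) set to constants, and the greedy/telescoping evaluation of the resulting contra-polymatroid vertices, is also correct.

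The weak point is the step that upgrades the weighted-sum equalities to the two \emph{region} equivalences via support functions. Equality of $\min\sum_k\alpha_kR_k$ over all $\alpha\in\mathbb{R}^L_+$ identifies two convex, upward-closed rate sections only if the smaller one is \emph{closed}; otherwise you only get equality of closures. None of the regions in the paper is defined with a closure operation, and closedness of these single-letter unions over auxiliary distributions is not free---it requires cardinality bounds on $X_{\emptyset}$ and the $X_{\{k\}}$ via the support lemma (the paper only remarks that such bounds can be derived). You assert closedness and convexity of the simplified regions without argument, and this is precisely where your proof would be incomplete as written. The gap is avoidable: argue the nontrivial inclusion directly. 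Given a tuple in $\mathcal{RD}_{\text{IC-VKG}}$ with auxiliaries $X_{(2^{\mathcal{I}_L})}$, Lemma \ref{le:contra} shows its rate vector dominates a convex combination of the $L!$ vertices; for each vertex (i.e., each permutation---admissible here because the IC distortion constraints involve only singletons and $\mathcal{I}_L$ and are symmetric) the elimination procedure from the proofs of Theorem \ref{th:VKG*}, Theorem \ref{th:MDindividual} and Corollary \ref{cor:cor3}, based on Lemma \ref{le:le1}, absorbs every non-singleton auxiliary (except $X_{\mathcal{I}_L}$ for the second form) into the singleton variables while preserving rates and distortions, placing the vertex in the simplified region; convexity and upward-closedness of the simplified region (time-sharing absorbed into $X_{\emptyset}$) then finish the inclusion with no topological caveats. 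With that replacement, or with an explicit cardinality-bound argument for closedness, your proposal is complete.
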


\section{Applications to Scalable Coding}
\label{sec:successiverefinement}

Scalable coding is a special case of the 2-description problem in which the distortion constraint on the second description, i.e., $D_{\{2\}}$, is not imposed.
The scalable coding rate-distortion region $\mathcal{RD}_{\text{SC}}$ is defined as
\begin{align*}
\mathcal{RD}_{\text{SC}}=\{(R_1,R_2,D_{\{1\}},D_{\{1,2\}}):(R_1,R_2,D_{\{1\}},\infty,D_{\{1,2\}})\in\mathcal{RD}_{\text{MD}}\}.
\end{align*}
It is proved in \cite{Rimoldi94} that the quadruple
$(R_1,R_2,D_{\{1\}},D_{\{1,2\}})\in\mathcal{RD}_{\text{SC}}$ if and only if there exist auxiliary random variables $X_{\{1\}}$ and $X_{\{1,2\}}$ jointly distributed with $X$ such that
\begin{align*}
&R_1\geq I(X;X_{\{1\}}),\\
&R_1+R_2\geq I(X;X_{\{1\}},X_{\{1,2\}}),\\
&D_{\{1\}}\geq \mathbb{E}[d(X,X_{\{1\}})],\\
&D_{\{1,2\}}\geq \mathbb{E}[d(X,X_{\{1,2\}})].
\end{align*}
It is clear that one can obtain $\mathcal{RD}_{\text{SC}}$ from $\mathcal{RD}_{\text{EGC}}$ by setting $X_{\{2\}}$ to be a constant.

Since the EGC region is equivalent to the EGC* region, it is not surprising that $\mathcal{RD}_{\text{SC}}$ can be written in an alternative form which resembles the EGC* region. By Lemma \ref{le:le1}, there exist a random variable
$X_{\{2\}}$, jointly distributed with $(X,X_{\{1\}},X_{\{1,2\}})$, and a function $f$,
such that
\begin{enumerate}
\item $X_{\{2\}}$ is independent of $X_{\{1\}}$;

\item $X_{\{1,2\}}=f(X_{\{1\}},X_{\{2\}})$;

\item $X-(X_{\{1\}},X_{\{1,2\}})-X_{\{2\}}$ form a Markov chain.
\end{enumerate}
Therefore, $\mathcal{RD}_{\text{SC}}$ can be written as the set of quadruples
$(R_1,R_2,D_{\{1\}},D_{\{1,2\}})$ for which there exist independent random variables $X_{\{1\}}$ and $X_{\{2\}}$, jointly distributed with $X$, and a function $f$, such that
\begin{align*}
&R_1\geq I(X;X_{\{1\}}),\\
&R_1+R_2\geq I(X;X_{\{1\}},X_{\{2\}}),\\
&D_{\{1\}}\geq \mathbb{E}[d(X,X_{\{1\}})],\\
&D_{\{1,2\}}\geq \mathbb{E}[d(X,f(X_{\{1\}},X_{\{2\}}))].
\end{align*}
It is somewhat interesting to note that a direct verification of the fact that this alternative form of $\mathcal{RD}_{\text{SC}}$ is equivalent to the EGC* region without constraint $D_{\{2\}}$ is not completely straightforward.

Since $D_{\{2\}}$ is not imposed in scalable coding, the second description essentially plays the role of a refinement layer. It is natural to ask whether the refinement layer itself can be useful, i.e., whether one can use the refinement layer alone to achieve a non-trivial reconstruction distortion. However, without further constraint, this problem is essentially the same as the multiple description problem. Therefore, we shall focus on the following special case. Define the minimum scalably
achievable total rate $R(R_1,D_{\{1\}},D_{\{1,2\}})$ with respect to $(R_1,D_{\{1\}},D_{\{1,2\}})$ as
\begin{align*}
R(R_1,D_{\{1\}},D_{\{1,2\}})=\min\{R_1+R_2:(R_1,R_2,D_{\{1\}},D_{\{1,2\}})\in\mathcal{RD}_{\text{SC}}\}.
\end{align*}
It is clear that \cite{Rimoldi94}
\begin{align*}
R(R_1,D_{\{1\}},D_{\{1,2\}})=\mathop{\min}\limits_{I(X;X_{\{1\}})\leq
R_1\atop{\mathbb{E}[d(X,X_{\{1\}})] \leq D_{\{1\}}\atop \mathbb{E}[d(X,X_{\{1,2\}})] \leq D_{\{1,2\}}}} I(X;X_{\{1\}},X_{\{1,2\}}).
\end{align*}
Let $\mathcal{Q}$ denote the convex closure of the set of quintuples $(R_1,R_2,D_{\{1\}},D_{\{2\}},D_{\{1,2\}})$ for which there exist auxiliary
random variables $X_{\mathcal{K}}$, $\emptyset\subset\mathcal{K}\subseteq\{1,2\}$, jointly distributed with
$X$, such that
\begin{align*}
&I(X_{\{1\}};X_{\{2\}})=0,\\
&R_k\geq I(X;X_{\{k\}}),\quad k\in\{1,2\},\\
&R_1+R_2\geq
I(X;X_{\{1\}},X_{\{2\}},X_{\{1,2\}}),\\
&D_\mathcal{K}\geq \mathbb{E}[d(X,X_{\mathcal{K}})],\quad \emptyset\subset\mathcal{K}\subseteq\{1,2\}.
\end{align*}
Note that $\mathcal{Q}$ is essentially the EGC region with an addition constraint $I(X_{\{1\}};X_{\{2\}})=0$ (i.e., $X_{\{1\}}$ and $X_{\{2\}}$ are independent).

\begin{lemma}\label{le:le2}
The EGC region is tight if $R_1+R_2=R(R_1,D_{\{1\}},D_{\{1,2\}})$; more precisely,
\begin{align*}
&\{(R_1,R_2,D_{\{1\}},D_{\{2\}},D_{\{1,2\}})\in\mathcal{RD}_{\text{MD}}:R_1+R_2=R(R_1,D_{\{1\}},D_{\{1,2\}})\}\\
&=\{(R_1,R_2,D_{\{1\}},D_{\{2\}},D_{\{1,2\}})\in\mathcal{Q}:R_1+R_2=R(R_1,D_{\{1\}},D_{\{1,2\}})\}.
\end{align*}
\end{lemma}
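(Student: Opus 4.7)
The containment $\supseteq$ is immediate from $\mathcal{Q}\subseteq\mathcal{RD}_{\text{EGC}}\subseteq\mathcal{RD}_{\text{MD}}$, so the substantive content is the reverse inclusion, which I plan to establish in two stages.

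The first stage is a clean algebraic observation: any EGC auxiliaries $(X_{\{1\}},X_{\{2\}},X_{\{1,2\}})$ realizing a tuple with $R_1+R_2=R(R_1,D_{\{1\}},D_{\{1,2\}})$ are automatically $\mathcal{Q}$ auxiliaries. Feasibility of the pair $(X_{\{1\}},X_{\{1,2\}})$ in the minimization defining $R(R_1,D_{\{1\}},D_{\{1,2\}})$ gives $I(X;X_{\{1\}},X_{\{1,2\}})\geq R_1+R_2$; substituting this into the EGC sum-rate inequality together with the chain-rule identity $I(X;X_{\{1\}},X_{\{2\}},X_{\{1,2\}})=I(X;X_{\{1\}},X_{\{1,2\}})+I(X;X_{\{2\}}\mid X_{\{1\}},X_{\{1,2\}})$ yields
\begin{align*}
I(X_{\{1\}};X_{\{2\}})+I(X;X_{\{2\}}\mid X_{\{1\}},X_{\{1,2\}})\leq 0.
\end{align*}
Nonnegativity of both summands forces each to vanish, so in fact $I(X_{\{1\}};X_{\{2\}})=0$ (and the Markov chain $X-(X_{\{1\}},X_{\{1,2\}})-X_{\{2\}}$ holds). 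Consequently $\mathcal{RD}_{\text{EGC}}$ coincides with $\mathcal{Q}$ on the constraint set.

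The second stage upgrades MD-achievability to EGC-achievability on the same set. Given any MD code whose rate-distortion tuple satisfies $R_1+R_2=R(R_1,D_{\{1\}},D_{\{1,2\}})$, applying Rimoldi's scalable converse to the pair $(M_1,M_2)$ viewed as a scalable code yields $(1/n)H(M_1,M_2)=R_1+R_2-o(1)$ and hence $I(M_1;M_2)=o(n)$. With a time-sharing variable $Q$ uniform on $\{1,\ldots,n\}$, I take the single-letter auxiliaries $X_{\mathcal{K}}=\hat{X}_{\mathcal{K}}(Q)\in\hat{\mathcal{X}}$ and $X=X(Q)$; the per-message rate bounds $R_k\geq I(X;X_{\{k\}})$, the aggregate bound $R_1+R_2\geq I(X;X_{\{1\}},X_{\{2\}},X_{\{1,2\}})$, and the distortion constraints $D_{\mathcal{K}}\geq\mathbb{E}[d(X,X_{\mathcal{K}})]$ all fall out of standard single-letterization.

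The hard part, and the main obstacle I expect, is controlling $I(X_{\{1\}};X_{\{2\}})$, which the naive construction inflates by up to $\log n$ bits via the shared index $Q$. My plan is to pre-compose the encoders with a uniformly random permutation of the time indices; because the source is i.i.d.\ and the distortion is an empirical average this symmetrization is distortion-preserving, and it equalizes the per-letter marginals of $\hat{X}_{\{k\}}(t)$ so that the residual coupling between $\hat{X}_{\{1\}}(Q)$ and $\hat{X}_{\{2\}}(Q)$ traces back to $I(M_1;M_2)=o(n)$ after normalization. Combined with the uniform cardinality bound $|\hat{\mathcal{X}}|$ on each auxiliary, a compactness argument extracts a convergent subsequence whose limiting auxiliaries satisfy the EGC constraints inside the closed region $\mathcal{RD}_{\text{EGC}}$; the first stage then places the original rate-distortion tuple in $\mathcal{Q}$.
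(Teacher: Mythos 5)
Your stage 1 is a sound observation, but note two things: it only compares $\mathcal{RD}_{\text{EGC}}$ with $\mathcal{Q}$ (the substantive claim of the lemma is that every tuple of $\mathcal{RD}_{\text{MD}}$ on the slice lies in $\mathcal{Q}$, which is your stage 2), and since $\mathcal{RD}_{\text{EGC}}$ is defined as a convex closure you would still have to handle convex combinations and limit points rather than single-auxiliary realizations.

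The genuine gap is in stage 2, at exactly the point you flag as hard: passing from $I(M_1;M_2)=o(n)$ to vanishing dependence of the time-shared auxiliaries. The quantity $\sum_t I(\hat{X}_{\{1\}}(t);\hat{X}_{\{2\}}(t))$ is not controlled by $I(M_1;M_2)$; data processing only bounds each summand by $I(M_1;M_2)=o(n)$, which is vacuous per letter. Concretely, if the two codewords share a single fair bit $B$ and each reconstruction letter equals $B$, then $I(M_1;M_2)$ is one bit while $I(\hat{X}_{\{1\}}(t);\hat{X}_{\{2\}}(t))=1$ for every $t$, so the auxiliaries $X_{\{k\}}=\hat{X}_{\{k\}}(Q)$ have $I(X_{\{1\}};X_{\{2\}})$ bounded away from zero. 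Your random-permutation symmetrization does nothing against this: the offending configuration is already symmetric in $t$, and permuting time indices only equalizes marginals, it does not "trace the coupling back to $I(M_1;M_2)$." The missing ingredient is Ahlswede's wringing technique: one conditions on a sublinear number of reconstruction coordinates so that the residual per-letter \emph{conditional} dependence becomes uniformly small, at negligible cost in rate and distortion, and only then does the single-letterization land inside $\mathcal{Q}$. This is precisely the route the paper takes—it notes that although this is not the no-excess-rate MD problem, Ahlswede's argument \cite{Ahlswede85} (cf.\ \cite{Tuncel03}) applies with no essential change. Without a wringing step (or an equivalent substitute), your compactness argument only certifies membership in $\mathcal{RD}_{\text{EGC}}$, not the independence constraint defining $\mathcal{Q}$, so the proof as proposed does not close. (A minor additional point: extracting $I(M_1;M_2)=o(n)$ from the scalable-coding converse also uses continuity of $R(R_1,D_{\{1\}},D_{\{1,2\}})$ in its arguments to absorb the $\epsilon$-slacks, which should be stated.)
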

\begin{proof}
It is worth noting that this problem is not identical to multiple description coding without excess rate. Nevertheless, Ahlswede's proof technique \cite{Ahlswede85} (also cf. \cite{Tuncel03}) can be directly applied here with no essential change. The details are omitted.
\end{proof}

Let $R(D)$ denote the rate-distortion function, i.e.,
\begin{align*}
R(D)=\min\limits_{p_{\hat{X}|X}:\mathbb{E}[d(X,\hat X)]\leq D}I(X;\hat X).
\end{align*}
Now we proceed to study the minimum achievable $D_{\{2\}}$ in the scenario where $R_1=R(D_{\{1\}})$ and $R_1+R_2=R(R(D_{\{1\}}),D_{\{1\}},D_{\{1,2\}})$. Define
\begin{align*}
D^*_{\{2\}}(D_{\{1\}},D_{\{1,2\}})=\min\limits_{R_1=R(D_{\{1\}})\atop{R_1+R_2=R(R_1,D_{\{1\}},D_{\{1,2\}})\atop(R_1,R_2,D_{\{1\}},D_{\{2\}},D_{\{1,2\}})\in\mathcal{RD}_{\text{MD}}}}
D_{\{2\}}.
\end{align*}
Though $D^*_{\{2\}}(D_{\{1\}},D_{\{1,2\}})$ is in principle computable using Lemma \ref{le:le2}, the calculation is often non-trivial due to the convex hull operation in the definition of the EGC region. We shall show that $D^*_{\{2\}}(D_{\{1\}},D_{\{1,2\}})$ has a more explicit characterization under certain technical conditions.

We need the following definition of weak independence from \cite{BY89}.
\begin{definition}
For jointly distributed random variables $U$ and $V$, $U$ is weakly independent of $V$ if the rows of the stochastic matrix $[p_{U|V}(u|v)]$ are linearly dependent.
\end{definition}

The following lemma can be found in \cite{BY89}.
\begin{lemma}\label{le:weakind}
For jointly distributed random variables $U$ and $V$, there exists a random variable $W$ satisfying
\begin{enumerate}
\item $U-V-W$ form a Markov chain;

\item $U$ and $W$ are independent;

\item $V$ and $W$ are not independent;
\end{enumerate}
if and only if $U$ is weakly independent of $V$.
\end{lemma}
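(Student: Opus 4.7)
My plan is to prove both directions by translating weak independence into the existence of a nontrivial tangent direction $\delta: \mathcal{V} \to \mathbb{R}$ satisfying $\sum_v \delta(v)\, p_{U|V}(u \mid v) = 0$ for every $u$. Summing over $u$ forces $\sum_v \delta(v) = 0$, so such a $\delta$ is a legitimate direction along which one can perturb $p_V$ without changing the induced marginal of $U$. Existence of a nonzero such $\delta$ is precisely the linear dependence of the rows of $[p_{U|V}(u \mid v)]$, i.e., weak independence of $U$ from $V$.

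For the \emph{if} direction, given such a nonzero $\delta$ (restricted without loss of generality to $\{v : p_V(v) > 0\}$), I would construct a binary $W$ by setting
\begin{align*}
p_{W|V}(1 \mid v) = \tfrac{1}{2}\bigl(1 + \epsilon\, \delta(v)/p_V(v)\bigr),
\end{align*}
with $\epsilon > 0$ small enough that this lies in $[0,1]$. The Markov chain $U-V-W$ is immediate from the construction. A direct computation shows $p_W(1) = \tfrac{1}{2}$ and $p_{UW}(u,1) = \tfrac{1}{2}\, p_U(u)$, because the $\delta$-linear term in the expansion vanishes by the defining property of $\delta$; hence $U$ and $W$ are independent. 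On the other hand, $p_{V|W}(v \mid 1) = p_V(v) + \epsilon\, \delta(v)$, which differs from $p_V(v)$ whenever $\delta(v) \neq 0$, so $V$ and $W$ are not independent.

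For the \emph{only if} direction, suppose such a $W$ exists. Conditioning the identity $p_U(u) = p_{U|W}(u\mid w)$ using the Markov chain gives $\sum_v p_{U|V}(u\mid v)\, p_{V|W}(v \mid w) = p_U(u)$ for every value $w$ in the support of $W$. Because $V$ and $W$ are not independent but $p_V = \sum_w p_W(w)\, p_{V|W}(\cdot \mid w)$, there must exist values $w_1 \neq w_2$ with $p_{V|W}(\cdot\mid w_1) \neq p_{V|W}(\cdot \mid w_2)$. Setting $\delta(v) = p_{V|W}(v\mid w_1) - p_{V|W}(v \mid w_2)$ then yields a nonzero function with $\sum_v \delta(v) = 0$ and $\sum_v \delta(v)\, p_{U|V}(u\mid v) = 0$ for every $u$, exhibiting the required linear dependence among the rows of $[p_{U|V}(u\mid v)]$.

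The main obstacle I anticipate is bookkeeping around zero-probability values of $V$: the perturbation construction needs $p_V(v) > 0$ on the support of $\delta$, so I would first argue that any weak-independence dependence can be taken to be supported on $\{v: p_V(v) > 0\}$ (components on zero-probability rows are irrelevant to the joint distribution), and only afterwards pick $\epsilon$ small enough to enforce $p_{W|V}(1\mid v) \in [0,1]$ on the finitely many active constraints. Everything else is routine linear algebra on the simplex of distributions on $\mathcal{V}$.
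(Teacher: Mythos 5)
Your proof is correct, and it is worth noting that the paper itself does not prove this lemma at all --- it simply cites Berger and Yeung \cite{BY89} --- so what you have written is a self-contained replacement for an external citation rather than an alternative to an in-paper argument. Both directions check out: the forward construction $p_{W|V}(1\mid v)=\tfrac12\bigl(1+\epsilon\,\delta(v)/p_V(v)\bigr)$ gives $p_W(1)=\tfrac12$ and $p_{UW}(u,1)=\tfrac12 p_U(u)$ exactly because $\sum_v\delta(v)=0$ and $\sum_v\delta(v)p_{U|V}(u\mid v)=0$, while $p_{V|W}(v\mid 1)=p_V(v)+\epsilon\,\delta(v)$ witnesses dependence of $V$ and $W$; and in the converse, the identity $\sum_v p_{U|V}(u\mid v)\,p_{V|W}(v\mid w)=p_U(u)$ for every $w$ in the support of $W$, applied to two values $w_1\neq w_2$ with distinct conditionals (which must exist since $V\not\perp W$), produces the required nontrivial null combination of the rows. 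The only point deserving a slightly more careful statement is the one you flag yourself: the matrix $[p_{U|V}(u\mid v)]$ should be understood as having rows only for $v$ with $p_V(v)>0$ (the standard convention, and the one under which the lemma is true); if one allowed rows at zero-probability values of $v$, a linear dependence need not restrict to the support and the ``if'' direction could fail, so the restriction is not merely bookkeeping but part of the correct reading of the definition. With that reading fixed, your choice of small $\epsilon$ over the finitely many active constraints and the support-restricted $\delta$ complete the argument.
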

\begin{theorem}\label{th:weakindependence}
If $X$ is not weakly independent of $X_{\{1\}}$ for any $X_{\{1\}}$ induced by $p_{X_{\{1\}}|X}$ that achieves $R(D_{\{1\}})$, then
\begin{align}
D^*_{\{2\}}(D_{\{1\}},D_{\{1,2\}})=\min\limits_{p_{X_{\{1\}}X_{\{2\}}|X},g_1,g_2}\mathbb{E}[d(X,g_1(X_{\{2\}}))],\label{eq:rl}
\end{align}
where the minimization is over $p_{X_{\{1\}}X_{\{2\}}|X}$, $g_1$, and $g_2$  subject to the constraints
\begin{align*}
&I(X_{\{1\}};X_{\{2\}})=0,\\
&I(X;X_{\{1\}})=R(D_{\{1\}}),\\
&I(X;X_{\{1\}},X_{\{2\}})=R(R(D_{\{1\}}),D_{\{1\}},D_{\{1,2\}}),\\
&\mathbb{E}[d(X,X_{\{1\}})]\leq D_{\{1\}},\\
&\mathbb{E}[d(X,g_2(X_{\{1\}},X_{\{2\}}))]\leq D_{\{1,2\}}.
\end{align*}
Here one can assume that $X_{\{2\}}$ is defined on a finite set with cardinality no greater than $|\hat{\mathcal{X}}|^4-|\hat{\mathcal{X}}|$.
\end{theorem}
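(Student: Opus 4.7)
The plan is to establish $D^*_{\{2\}}(D_{\{1\}},D_{\{1,2\}})$ equals the right-hand side of (\ref{eq:rl}) by two separate inequalities. Achievability is largely bookkeeping via Lemma \ref{le:le2}, so the real work is in the converse, where the weak-independence hypothesis is crucial.

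For $D^*_{\{2\}}\leq$~RHS, take any feasible $(p_{X_{\{1\}}X_{\{2\}}|X},g_1,g_2)$ and use $X_{\{1\}}$, $g_1(X_{\{2\}})$, $g_2(X_{\{1\}},X_{\{2\}})$ as the three auxiliary random variables for the region $\mathcal{Q}$. The independence $I(X_{\{1\}};g_1(X_{\{2\}}))\le I(X_{\{1\}};X_{\{2\}})=0$, the rate bounds (using that $I(X_{\{1\}};X_{\{2\}})=0$ yields $I(X;X_{\{2\}})\le I(X;X_{\{2\}}|X_{\{1\}})=R_2$), and the distortion bounds all hold, so the resulting tuple lies in $\mathcal{Q}$ and hence in $\mathcal{RD}_{\text{MD}}$ by Lemma \ref{le:le2}.

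For the converse, fix an achievable tuple $(R_1,R_2,D_{\{1\}},D_{\{2\}},D_{\{1,2\}})$ with $R_1=R(D_{\{1\}})$, $R_1+R_2=R(R_1,D_{\{1\}},D_{\{1,2\}})$, and $D_{\{2\}}=D^*_{\{2\}}$. By Lemma \ref{le:le2} it lies in $\mathcal{Q}$, hence is a convex combination (with weights $\lambda_q$) of per-letter tuples, each achieved by a distribution $p_q$ with $I(X_{\{1\},q};X_{\{2\},q})=0$. Introduce a time-sharing variable $Q\perp X$ with $\Pr(Q{=}q)=\lambda_q$ and form the merged variables $X_{\{1\}}^{(0)},X_{\{2\}}^{(0)},X_{\{1,2\}}^{(0)}\in\hat{\mathcal{X}}$. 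The tight rate $R_1=R(D_{\{1\}})$ forces the chain
\begin{align*}
R_1\;\geq\;I(X;X_{\{1\}}^{(0)}|Q)\;\geq\;I(X;X_{\{1\}}^{(0)})\;\geq\;R(\mathbb{E}[d(X,X_{\{1\}}^{(0)})])\;\geq\;R(D_{\{1\}})\;=\;R_1
\end{align*}
to collapse to equality throughout, which yields the Markov chain $X-X_{\{1\}}^{(0)}-Q$ and shows that $p_{X_{\{1\}}^{(0)}|X}$ achieves $R(D_{\{1\}})$. This is the pivotal step: applying the contrapositive of Lemma \ref{le:weakind} with $U=X$, $V=X_{\{1\}}^{(0)}$, $W=Q$, the weak-independence hypothesis forces $X_{\{1\}}^{(0)}\perp Q$; combined with the per-$q$ independence $X_{\{1\}}^{(0)}\perp X_{\{2\}}^{(0)}|Q$, this lifts to $X_{\{1\}}^{(0)}\perp(X_{\{2\}}^{(0)},Q)$ and in particular $I(X_{\{1\}}^{(0)};X_{\{2\}}^{(0)})=0$. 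A parallel chain driven by $R_1+R_2=R(R_1,D_{\{1\}},D_{\{1,2\}})$, together with monotonicity of $R(\cdot,\cdot,\cdot)$, yields $I(X;X_{\{1\}}^{(0)},X_{\{2\}}^{(0)},X_{\{1,2\}}^{(0)})=R_1+R_2$.

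Finally, apply Lemma \ref{le:le1} with $U=X$, $V=(X_{\{1\}}^{(0)},X_{\{2\}}^{(0)})$, $W=X_{\{1,2\}}^{(0)}$ to obtain a random variable $Z$ of cardinality at most $|\hat{\mathcal{X}}|^3-1$, independent of $(X_{\{1\}}^{(0)},X_{\{2\}}^{(0)})$, with $X_{\{1,2\}}^{(0)}=f(X_{\{1\}}^{(0)},X_{\{2\}}^{(0)},Z)$ and $X-(X_{\{1\}}^{(0)},X_{\{2\}}^{(0)},X_{\{1,2\}}^{(0)})-Z$ Markov. Define $X_{\{1\}}:=X_{\{1\}}^{(0)}$, $X_{\{2\}}:=(X_{\{2\}}^{(0)},Z)$, let $g_1$ project onto the $X_{\{2\}}^{(0)}$ coordinate, and set $g_2:=f$. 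Independence $X_{\{1\}}\perp X_{\{2\}}$ follows by concatenating $X_{\{1\}}^{(0)}\perp X_{\{2\}}^{(0)}$ with $Z\perp(X_{\{1\}}^{(0)},X_{\{2\}}^{(0)})$; the Markov chain together with the fact that $X_{\{1,2\}}^{(0)}$ is a function of $(X_{\{1\}},X_{\{2\}})$ promotes $I(X;X_{\{1\}}^{(0)},X_{\{2\}}^{(0)},X_{\{1,2\}}^{(0)})=R_1+R_2$ to $I(X;X_{\{1\}},X_{\{2\}})=R_1+R_2$; the distortion constraints transfer by construction; and the cardinality is $|\mathcal{X}_{\{2\}}|\le|\hat{\mathcal{X}}|(|\hat{\mathcal{X}}|^3-1)=|\hat{\mathcal{X}}|^4-|\hat{\mathcal{X}}|$. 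The hard part, by a comfortable margin, is the weak-independence step, since the convex-hull representation of $\mathcal{Q}$ otherwise couples the time-sharing $Q$ to both $X_{\{1\}}^{(0)}$ and $X_{\{2\}}^{(0)}$, and only the hypothesis allows decoupling $Q$ from $X_{\{1\}}^{(0)}$ so that the per-component independence $X_{\{1\}}\perp X_{\{2\}}$ survives the merge.
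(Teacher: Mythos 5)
Your proposal is correct, and both the achievability step and the final substitution step (Lemma \ref{le:le1} applied with $V=(X_{\{1\}},X_{\{2\}})$, $W=X_{\{1,2\}}$, absorbing $Z$ into $X_{\{2\}}$ to get the $|\hat{\mathcal{X}}|^4-|\hat{\mathcal{X}}|$ bound) coincide with the paper's. Where you genuinely diverge is the converse: the paper never touches the convex-hull representation of $\mathcal{Q}$; instead it observes that the VKG region contains the EGC region and is already convex thanks to the common layer $X_{\emptyset}$, so tightness of the sum rate (Lemma \ref{le:le2}) yields single-letter auxiliaries $X_{\emptyset},X_{\{1\}},X_{\{2\}},X_{\{1,2\}}$ directly, and the forced equalities give $X\perp X_{\emptyset}$, $X-X_{\{1\}}-X_{\emptyset}$, $X_{\{1\}}-X_{\emptyset}-X_{\{2\}}$; Lemma \ref{le:weakind} is then applied with $W=X_{\emptyset}$ to conclude $X_{\emptyset}\perp X_{\{1\}}$ and hence $X_{\{1\}}\perp X_{\{2\}}$. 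Your time-sharing variable $Q$ plays exactly the role of $X_{\emptyset}$, and your chain of forced equalities and the application of Lemma \ref{le:weakind} with $W=Q$ is the same decoupling idea; what your route buys is self-containedness (no appeal to the VKG machinery of Section \ref{sec:Mannychannels}), while the paper's route buys a cleaner handling of convexification: by writing "it lies in $\mathcal{Q}$, hence is a convex combination of per-letter tuples" you are implicitly replacing the convex \emph{closure} in the definition of $\mathcal{Q}$ by a finite convex hull, which needs a compactness/cardinality (or limiting) argument to justify, whereas the paper's already-parameterized (convex) VKG region sidesteps this step — though, to be fair, the paper is itself terse about the closure issue when it extracts single-letter VKG auxiliaries. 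This is a fixable technicality, not a flaw in the idea; the mathematical core of your argument matches the paper's.
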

\begin{proof}
First we shall show that the right-hand side of (\ref{eq:rl}) is achievable. Given any $D_{\{1\}}$ and $D_{\{1,2\}}$ for which there exist auxiliary random variables $X_{\mathcal{K}}$, $\emptyset\subset\mathcal{K}\subseteq\{1,2\}$, jointly distributed with $X$, and a function $g_2$ such that
\begin{align*}
&I(X_{\{1\}};X_{\{2\}})=0,\\
&R(D_{\{1\}})=I(X;X_{\{1\}}),\\
&R(R(D_{\{1\}}),D_{\{1\}},D_{\{1,2\}})=I(X;X_{\{1\}},X_{\{2\}}),\\
&D_{\{1\}}\geq\mathbb{E}[d(X,X_{\{1\}})],\\
&D_{\{1,2\}}\geq\mathbb{E}[d(X,g_2(X_{\{1\}},X_{\{2\}}))],
\end{align*}
we have
\begin{align*}
&R(R(D_{\{1\}}),D_{\{1\}},D_{\{1,2\}})=I(X;X_{\{1\}},X_{\{2\}})+I(X_{\{1\}},X_{\{2\}}),\\
&R(R(D_{\{1\}}),D_{\{1\}},D_{\{1,2\}})-R(D_{\{1\}})=I(X_{\{1\}},X;X_{\{2\}})\geq I(X;X_{\{2\}}).
\end{align*}
Therefore, the quintuple $(R_1,R_2,D_{\{1\}},D_{\{2\}},D_{\{1,2\}})$, where
\begin{align*}
&R_1=R(D_{\{1\}}),\\
&R_2=R(R(D_{\{1\}}),D_{\{1\}},D_{\{1,2\}})-R(D_{\{1\}}),\\
&D_{\{2\}}=\mathbb{E}[d(X,g_1(X_{\{2\}}))],
\end{align*}
is contained in the EGC* region for any function $g_1$. This proves the achievability part.

Now we proceed to prove the converse part. Let $R_1=R(D_{\{1\}})$ and $R_2=R(R(D_{\{1\}}),D_{\{1\}},D_{\{1,2\}})-R(D_{\{1\}})$. Since the VKG region includes the EGC region, Lemma \ref{le:le2} implies that the VKG region is also tight when the total rate is equal to $R(R(D_{\{1\}}),D_{\{1\}},D_{\{1,2\}})$. Therefore, if the quintuple $(R_1,R_2,D_{\{1\}},D_{\{2\}},D_{\{1,2\}})$ is achievable, then there exist auxiliary random variables $X_{\mathcal{K}}$, $\mathcal{K}\subseteq\{1,2\}$, jointly distributed with $X$ such that
\begin{align*}
&R_k\geq I(X;X_{\emptyset},X_{\{k\}}),\quad k\in\{1,2\},\\
&R_1+R_2\geq 2I(X;X_{\emptyset})+I(X;X_{\{1\}},X_{\{2\}},X_{\{1,2\}}|X_{\emptyset})+I(X_{\{1\}};X_{\{2\}}|X_{\emptyset})\\
&D_{\mathcal{K}}\geq \mathbb{E}[d(X,X_{\mathcal{K}})],\quad \emptyset\subset\mathcal{K}\subseteq\{1,2\}.
\end{align*}
By the definition of $R(D_{\{1\}})$ and $R(R(D_{\{1\}}),D_{\{1\}},D_{\{1,2\}})$, we must have
\begin{align*}
&R(D_{\{1\}})=I(X;X_{\emptyset},X_{\{1\}})=I(X;X_{\{1\}}),\\
&R(R(D_{\{1\}}),D_{\{1\}},D_{\{1,2\}})=2I(X;X_{\emptyset})+I(X;X_{\{1\}},X_{\{2\}},X_{\{1,2\}}|X_{\emptyset})+I(X_{\{1\}};X_{\{2\}}|X_{\emptyset})=I(X;X_{\{1\}},X_{\{1,2\}}),
\end{align*}
which implies that
\begin{enumerate}
\item $X$ and $X_{\emptyset}$ are independent;

\item $X-X_{\{1\}}-X_{\emptyset}$ form a Markov chain;

\item $X_{\{1\}}-X_{\emptyset}-X_{\{2\}}$ form a Markov chain;

\item $X-(X_{\{1\}},X_{\{1,2\}})-(X_{\emptyset},X_{\{2\}})$ form a Markov chain;

\item $p_{X_{\{1\}}|X}$ achieves $R(D_{\{1\}})$.
\end{enumerate}
Since $X$ is not weakly independent of $X_{\{1\}}$, it follows from Lemma \ref{le:weakind} that $X_{\emptyset}$ and $X_{\{1\}}$ are independent, which further implies that $X_{\{1\}}$ and $X_{\{2\}}$ are independent.
By Lemma \ref{le:le1}, there exist a random variable $Z$ one $\mathcal{Z}$ with $|\mathcal{Z}|\leq |\hat{\mathcal{X}}|^3-1$ and a function $f$ such that
\begin{enumerate}
\item $Z$ is independent of $(X_{\{1\}},X_{\{2\}})$;

\item $X_{\{1,2\}}=f(X_{\{1\}},X_{\{2\}},Z)$;

\item $X-(X_{\{1\}},X_{\{2\}},X_{\{1,2\}})-Z$ form a Markov chain.
\end{enumerate}
By setting $X'_{\{2\}}=(X_{\{2\}},Z)$, it is easy to verify that
\begin{align*}
&I(X_{\{1\}};X'_{\{2\}})=0,\\
&R(R(D_{\{1\}}),D_{\{1\}},D_{\{1,2\}})=I(X;X_{\{1\}},X'_{\{1,2\}}),\\
&D_{\{2\}}\geq\mathbb{E}[d(X,g_1(X'_{\{2\}})),\\
&D_{\{1,2\}}\geq\mathbb{E}[d(X,g_2(X_{\{1\}},X'_{\{2\}}))],
\end{align*}
where $g_1(X'_{\{2\}})=g_1(X_{\{2\}},Z)=X_{\{2\}}$ and $g_2(X_{\{1\}},X'_{\{2\}})=f(X_{\{1\}},X_{\{2\}},Z)=X_{\{1,2\}}$. The proof is complete.
\end{proof}

Now we give an example for which $D^*_{\{2\}}(D_{\{1\}},D_{\{1,2\}})$ can be calculated explicitly.
\begin{theorem}\label{th:binary}
For a binary symmetric source with Hamming distortion measure,
\begin{align*}
D^*_{\{2\}}(D_{\{1\}},D_{\{1,2\}}) = \frac{1}{2}+D_{\{1,2\}}-D_{\{1\}}
\end{align*}
for $0\leq D_{\{1,2\}}\leq D_{\{1\}}\leq \frac{1}{2}$.
\end{theorem}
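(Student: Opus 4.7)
The plan is to apply Theorem~\ref{th:weakindependence}. The hypothesis holds for $D_{\{1\}}<1/2$: any achiever of $R(D_{\{1\}})=1-h(D_{\{1\}})$ makes $p_{X|X_{\{1\}}}$ a backward $\mathrm{BSC}(D_{\{1\}})$, whose two rows $(1-D_{\{1\}},D_{\{1\}})$ and $(D_{\{1\}},1-D_{\{1\}})$ are linearly independent, so $X$ is not weakly independent of $X_{\{1\}}$. The boundary $D_{\{1\}}=1/2$ is trivial: $R(D_{\{1\}})=0$ forces $X_{\{1\}}$ constant, reducing the problem to ordinary rate-distortion coding at distortion $D_{\{1,2\}}$ from $X_{\{2\}}$ alone, whence $D^*_{\{2\}}=D_{\{1,2\}}=\tfrac12+D_{\{1,2\}}-\tfrac12$.

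Next I extract structure from the constraints in Theorem~\ref{th:weakindependence}. The equality $I(X;X_{\{1\}},X_{\{2\}})=R(D_{\{1,2\}})$ together with the distortion constraint on $g_2$ and the rate-distortion converse applied to $\hat X:=g_2(X_{\{1\}},X_{\{2\}})$ force $\hat X$ to be an optimal BSS reconstruction at $D_{\{1,2\}}$, namely $\hat X$ uniform $\mathrm{Bern}(1/2)$ with $X=\hat X\oplus N_{12}$ for $N_{12}\sim\mathrm{Bern}(D_{\{1,2\}})$ independent of $\hat X$ (this reconstruction is essentially unique for the BSS when $D_{\{1,2\}}<1/2$). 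Analogously $X=X_{\{1\}}\oplus N_1$ with $N_1\sim\mathrm{Bern}(D_{\{1\}})\perp X_{\{1\}}$. Equality in $I(X;\hat X)\le I(X;X_{\{1\}},X_{\{2\}})$ yields the Markov chain $X-\hat X-(X_{\{1\}},X_{\{2\}})$; combined with the two BSCs this makes the cascade $\hat X\to X_{\{1\}}$ itself a BSC with crossover $\alpha=(D_{\{1\}}-D_{\{1,2\}})/(1-2D_{\{1,2\}})\in[0,1/2]$, so $X_{\{1\}}=\hat X\oplus B$ with $B\sim\mathrm{Bern}(\alpha)\perp\hat X$.

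Now I classify each value $c$ of $X_{\{2\}}$ by the restriction $g_2(\cdot,c):\{0,1\}\to\{0,1\}$, of which there are exactly four: $\hat X=0$, $\hat X=X_{\{1\}}$, $\hat X=1\oplus X_{\{1\}}$, $\hat X=1$. Let $q_1,q_2,q_3,q_4$ be the corresponding total probabilities. Using $X_{\{1\}}\perp X_{\{2\}}$ (so $X_{\{1\}}\mid X_{\{2\}}=c$ is uniform) together with the required marginal law of $(\hat X,X_{\{1\}})$---in particular $\hat X$ uniform and $\Pr(X_{\{1\}}=0\mid\hat X=0)=1-\alpha$---routine bookkeeping forces $q_4=q_1$, $q_1+q_2=1-\alpha$, $q_1+q_3=\alpha$, whence $q_1\in[0,\alpha]$. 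A short computation gives $D_{\{2\}}\mid c=D_{\{1,2\}}$ in cases~1 and~4 (since $\hat X$ is constant and $X=\hat X\oplus N_{12}$), and $D_{\{2\}}\mid c=1/2$ in cases~2 and~3 (since conditional uniformity of $X_{\{1\}}$ averages $X$ to uniform). Averaging yields $D_{\{2\}}=2q_1 D_{\{1,2\}}+(1-2q_1)/2=\tfrac12-q_1(1-2D_{\{1,2\}})$, minimized over $q_1\in[0,\alpha]$ at $q_1=\alpha$ to give $D^*_{\{2\}}=\tfrac12-D_{\{1\}}+D_{\{1,2\}}$; achievability is witnessed by $q_1=q_4=\alpha$, $q_2=1-2\alpha$, $q_3=0$, which defines a valid joint distribution attaining the bound.

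The main obstacle is extracting the cascade BSC structure of paragraph~2 (in particular the identification of $\alpha$), which relies on the uniqueness of the BSS rate-distortion achiever together with a careful use of the Markov condition forced by equality in the rate-distortion converse. Once that is in hand, the four-case decomposition is immediate, its combinatorial constraints follow by routine bookkeeping, and the final optimization is a one-variable linear program.
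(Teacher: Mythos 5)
Your proposal is correct and follows essentially the same route as the paper: invoke Theorem~\ref{th:weakindependence} (with the $D_{\{1\}}=\tfrac12$ boundary handled separately), use successive refinability and the uniqueness of the BSS rate--distortion achievers to pin down $p_{XX_{\{1\}}\hat X}$ as the cascade of backward BSCs with crossover $\alpha=p=(D_{\{1\}}-D_{\{1,2\}})/(1-2D_{\{1,2\}})$, and then exploit the deterministic-function and independence constraints on $X_{\{2\}}$ to reduce to a four-case analysis and a one-variable linear optimization. The only difference is presentational: you track class probabilities of the four maps $g_2(\cdot,c)$ and the per-class conditional distortions ($D_{\{1,2\}}$ or $\tfrac12$), whereas the paper reaches the same four cases and the same objective by explicit tabulation of the joint pmf and merging of symbols.
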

\begin{proof}
The proof is given in Appendix \ref{sec:binary}.
\end{proof}

\section{Concluding Remarks}
\label{sec:conclusion}

We have established a random variable substitution lemma and used it
to clarify the relationship among several existing achievable
rate-distortion regions for multiple description coding.

Like many other ideas in information theory, our random variable
substitution lemma finds its seeds in Shannon's pioneering work.
Consider a finite-state channel $p_{Y|XS}$, where the state process
$\{S_t\}_{t=1}^\infty$ is stationary and memoryless. It is well
known that the capacity is given by
\begin{align*}
C=\max\limits_{p_{X|S}}I(X;Y|S)
\end{align*}
when the state process is available at both the transmitter and the receiver. By Lemma \ref{le:le1}, for any $(X,Y,S)$, there exist
a random variable $Z$ on $\mathcal{Z}$ and a function $f:
\mathcal{Z} \times \mathcal{S} \to \mathcal{X}$ such that
\begin{enumerate}
\item $Z$ is independent of $S$;

\item $X=f(S,Z)$;

\item $Y-(X,S)-Z$ form a Markov chain.
\end{enumerate}
Therefore, we have
\begin{align}
C&=\max\limits_{p_{X|S}}I(X;Y|S)\nonumber\\
&=\max\limits_{p_Z,f:\mathcal{Z} \times \mathcal{S} \to \mathcal{X}}I(Z;Y|S).\label{eq:Shannon}
\end{align}
Note that (\ref{eq:Shannon}) is in fact Shannon's capacity formula with channel state information at the transmitter \cite{Shannon58} applied to the case where
the channel state information is also available at the receiver; in this setting, $f(Z,\cdot)$ is sometimes referred to as Shannon's strategy.

\appendices

\section{Proof of Lemma 1}
\label{sec:app1}

Let $Y$ be a random variable independent of $V$ and uniformly distributed over $[0,1]$. It is obvious that for each $v\in\mathcal{V}$ we can find a function $f_v$ satisfying
\begin{align*}
\mathbb{P}(f_v(Y)=w)=p_{W|V}(w|v),\quad w\in\mathcal{W}.
\end{align*}
Now define a function $f$ such that
\begin{align*}
f(v,y)=f_v(y),\quad v\in\mathcal{V}, y\in[0,1].
\end{align*}
It is clear that
\begin{align}
\mathbb{P}(V=v,f(V,Y)=w)=p_{VW}(v,w),\quad v\in\mathcal{V},w\in\mathcal{W}.\label{eq:preservation1}
\end{align}
Note that
\begin{align*}
\mathbb{P}(V=v,f(V,Y)=w)=\mathbb{E}[\mathbb{P}(V=v,f(V,Y)=w|Y)],\quad v\in\mathcal{V},w\in\mathcal{W}.
\end{align*}
It can be shown by invoking the support lemma \cite{Csiszar81} that there exist a finite set $\mathcal{Z}\subset[0,1]$ with $|\mathcal{Z}|\leq|\mathcal{V}||\mathcal{W}|-1$ and a random variable $Z$ on $\mathcal{Z}$, independent of $V$, such that
\begin{align}
\mathbb{P}(V=v,f(V,Z)=w)&=\mathbb{E}[\mathbb{P}(V=v,f(V,Z)=w|Z)]\nonumber\\
&=\mathbb{E}[\mathbb{P}(V=v,f(V,Y)=w|Y)]\nonumber\\
&=\mathbb{P}(V=v,f(V,Y)=w),\quad v\in\mathcal{V},w\in\mathcal{W}.\label{eq:preservation2}
\end{align}
By (\ref{eq:preservation1}) and (\ref{eq:preservation2}), we can see that $p_{VW}$ is preserved if $W$ is set to be equal to $f(V,Z)$. Now we incorporate $U$ into the probability space by setting $p_{U|VWZ}=p_{U|VW}$. It can be readily verified that $p_{UVW}$ is preserved and $U-(V,W)-Z$ indeed form a Markov chain. The proof is complete.

\section{Proof of Lemma \ref{le:contra}}\label{app:contra}

By the definition of contra-polymatroid \cite{Edmonds70}, it
suffices to show that the set function
$\psi:2^{\mathcal{I}_L}\rightarrow\mathbb{R}_+$ satisfies 1)
$\psi(\emptyset)=0$ (normalized), 2)
$\psi(\mathcal{S})\leq\psi(\mathcal{T})$ if
$\mathcal{S}\subset\mathcal{T}$ (nondecreasing), 3)
$\psi(\mathcal{S})+\psi(\mathcal{T})\leq\psi(\mathcal{S}\cup\mathcal{T})+\psi(\mathcal{S}\cap\mathcal{T})$
(supermodular).
\begin{enumerate}
\item Normalized: We have
\begin{align*}
\psi(\emptyset)=-I(X;X_{\emptyset})-H(X_{\emptyset}|X)+H(X_{\emptyset})=0.
\end{align*}

\item Nondecreasing: If $\mathcal{S}\subset\mathcal{T}$, then
\begin{align*}
\psi(\mathcal{T})-\psi(\mathcal{S})&=(|\mathcal{T}|-|\mathcal{S}|)I(X;X_{\emptyset})-H(X_{(2^{\mathcal{T}})}|X)+H(X_{(2^{\mathcal{S}})}|X)+\sum\limits_{\mathcal{A}\in 2^{\mathcal{T}}-2^{\mathcal{S}}}H(X_{\mathcal{A}}|X_{(2^{\mathcal{A}}-\{\mathcal{A}\})})\\
&\geq-H(X_{(2^{\mathcal{T}}-2^{\mathcal{S}})}|X,X_{(2^{\mathcal{S}})})+\sum\limits_{\mathcal{A}\in 2^{\mathcal{T}}-2^{\mathcal{S}}}H(X_{\mathcal{A}}|X_{(2^{\mathcal{A}}-\{\mathcal{A}\})})\\
&\geq-\sum\limits_{k=1}^{|\mathcal{T}|}\sum\limits_{\mathcal{A}\in 2^{\mathcal{T}}-2^{\mathcal{S}},|\mathcal{A}|=k}H(X_{\mathcal{A}}|X,\{X_{\mathcal{B}}\}_{\mathcal{B}\in 2^{\mathcal{T}},|\mathcal{B}|<k})+\sum\limits_{\mathcal{A}\in 2^{\mathcal{T}}-2^{\mathcal{S}}}H(X_{\mathcal{A}}|X_{(2^{\mathcal{A}}-\{\mathcal{A}\})})\\
&\geq-\sum\limits_{k=1}^{|\mathcal{T}|}\sum\limits_{\mathcal{A}\in 2^{\mathcal{T}}-2^{\mathcal{S}},|\mathcal{A}|=k}H(X_{\mathcal{A}}|X_{(2^\mathcal{A}-\{\mathcal{A}\})})+\sum\limits_{\mathcal{A}\in 2^{\mathcal{T}}-2^{\mathcal{S}}}H(X_{\mathcal{A}}|X_{(2^{\mathcal{A}}-\{\mathcal{A}\})})\\
&=0.
\end{align*}

\item Supermodular: We have
\begin{align*}
&(\psi(\mathcal{S}\cup\mathcal{T})-\psi(\mathcal{T}))-(\psi(\mathcal{S})-\psi(\mathcal{S}\cap\mathcal{T}))\\
&=(|\mathcal{S}\cup\mathcal{T}|-|\mathcal{T}|)I(X;X_{\emptyset})-H(X_{(2^{\mathcal{S}\cup\mathcal{T}}-2^{\mathcal{T}})}|X,X_{(2^{\mathcal{T}})})+\sum\limits_{\mathcal{A}\in 2^{\mathcal{S}\cup\mathcal{T}}-2^{\mathcal{T}}}H(X_{\mathcal{A}}|X_{(2^{\mathcal{A}}-\{\mathcal{A}\})})\\
&\quad-(|\mathcal{S}|-|\mathcal{S}\cap\mathcal{T}|)I(X;X_{\emptyset})+H(X_{(2^{\mathcal{S}}-2^{\mathcal{S}\cap\mathcal{T}})}|X,X_{(2^{\mathcal{S}\cap\mathcal{T}})})-\sum\limits_{\mathcal{A}\in 2^{\mathcal{S}}-2^{\mathcal{S}\cap\mathcal{T}}}H(X_{\mathcal{A}}|X_{(2^{\mathcal{A}}-\{\mathcal{A}\})})\\
&=-H(X_{(2^{\mathcal{S}\cup\mathcal{T}}-2^{\mathcal{T}})}|X,X_{(2^{\mathcal{T}})})+H(X_{(2^{\mathcal{S}}-2^{\mathcal{S}\cap\mathcal{T}})}|X,X_{(2^{\mathcal{S}\cap\mathcal{T}})})+\sum\limits_{\mathcal{A}\in 2^{\mathcal{S}\cup\mathcal{T}}-\mathcal{M}}H(X_{\mathcal{A}}|X_{(2^{\mathcal{A}}-\{\mathcal{A}\})})\\
&\geq-H(X_{(2^{\mathcal{S}\cup\mathcal{T}}-\mathcal{M})}|X,X_{(\mathcal{M})})+\sum\limits_{\mathcal{A}\in 2^{\mathcal{S}\cup\mathcal{T}}-\mathcal{M}}H(X_{\mathcal{A}}|X_{(2^{\mathcal{A}}-\{\mathcal{A}\})})\\
&\geq-\sum\limits_{k=1}^{|\mathcal{S}\cup\mathcal{T}|}\sum\limits_{\mathcal{A}\in 2^{\mathcal{S}\cup\mathcal{T}}-\mathcal{M},|\mathcal{A}|=k}H(X_{\mathcal{A}}|X,\{X_{\mathcal{B}}\}_{\mathcal{B}\in 2^{\mathcal{S}\cup\mathcal{T}},|\mathcal{B}|<k})+\sum\limits_{\mathcal{A}\in 2^{\mathcal{S}\cup\mathcal{T}}-\mathcal{M}}H(X_{\mathcal{A}}|X_{(2^{\mathcal{A}}-\{\mathcal{A}\})})\\
&\geq-\sum\limits_{k=1}^{|\mathcal{S}\cup\mathcal{T}|}\sum\limits_{\mathcal{A}\in 2^{\mathcal{S}\cup\mathcal{T}}-\mathcal{M},|\mathcal{A}|=k}H(X_{\mathcal{A}}|X_{(2^{\mathcal{A}}-\{\mathcal{A}\})})+\sum\limits_{\mathcal{A}\in 2^{\mathcal{S}\cup\mathcal{T}}-\mathcal{M}}H(X_{\mathcal{A}}|X_{(2^{\mathcal{A}}-\{\mathcal{A}\})})\\
&= 0,
\end{align*}
\end{enumerate}
where $\mathcal{M}=2^{\mathcal{S}}\cup 2^{\mathcal{T}}$. The proof is complete.

\section{Proof of Theorem \ref{th:VKG*}}
\label{sec:VKG*}

It is clear that $\mathcal{RD}_{\text{VKG*}}\subseteq\mathcal{RD}_{\text{VKG}}$. Therefore, we just need to show that $\mathcal{RD}_{\text{VKG}}\subseteq\mathcal{RD}_{\text{VKG*}}$.

In view of Lemma \ref{le:contra} and the property of contra-polymatroid \cite{Edmonds70}, for fixed $p_{XX_{(2^{\mathcal{I}_L})}}$ and $\phi_{\mathcal{K}}$, $\emptyset\subset\mathcal{K}\subseteq\mathcal{I}_L$, the region specified by (\ref{eq:rate}) and (\ref{eq:distortion}) has $L!$ vertices:  $(R_1(\pi),\cdots,R_L(\pi),D_{\mathcal{K}}(\pi),\emptyset\subset\mathcal{K}\subseteq\mathcal{I}_L)$ is a vertex for each permutation $\pi$ on $\mathcal{I}_L$, where
\begin{align*}
&R_{\pi(1)}(\pi)=\psi(\{\pi(1)\}),\\
&R_{\pi(k)}(\pi)=\psi(\{\pi(1),\cdots,\pi(k)\})-\psi(\{\pi(1),\cdots,\pi(k-1)\}),\quad k\in\mathcal{I}_L-\{1\},\\
&D_{\mathcal{K}}(\pi)=\mathbb{E}[d(X,\phi_{\mathcal{K}}(X_{(2^{\mathcal{K}})}))],\quad\emptyset\subset\mathcal{K}\subseteq\mathcal{I}_L.
\end{align*}
Since the VKG* region is a convex set, it suffices to show that these $L!$ vertices are contained in the VKG* region.

Without loss of generality, we shall assume that $\pi(k)=k$, $k\in\mathcal{I}_L$. In this case, we have
\begin{align*}
R_L(\pi)=\psi(\mathcal{I}_L)-\psi(\mathcal{I}_{L-1}).
\end{align*}
Now we proceed to write $R_L(\pi)$ as a sum of certain mutual information quantities.
Define
\begin{align}\nonumber
&\mathcal{S}_1(k) = \{ \mathcal{A} : \mathcal{A} \in
\mathcal{L}, |\mathcal{A}| = k, L \in \mathcal{A}\}
\\\nonumber
&\mathcal{S}_2(k) = \{ \mathcal{A} : \mathcal{A} \in
\mathcal{L}, |\mathcal{A}| < k, L \in \mathcal{A}\}.
\end{align}
Note that
\begin{align}\nonumber
R_L(\pi)  & = I(X;X_\emptyset  ) + H(X_{(2^{\mathcal{I}_{L-1}} )} |X) -
H(X_{(2^{\mathcal{I}_L} )} |X) + \sum\limits_{k = 1}^L
{\sum\limits_{\mathcal{A} \in \mathcal{S}_1(k)} {H(
{X_{\mathcal{A}} |X_{(2^{\mathcal{A}} - \{ \mathcal{A}\} )} })} }
\\\nonumber
& = I(X;X_\emptyset ) - H(X_{(2^{\mathcal{I}_L} )} |X,
X_{(2^{\mathcal{I}_{L-1}} )} ) + H(X_{\{L\}}|X_{\emptyset})
+\sum\limits_{k = 2}^L {\sum\limits_{\mathcal{A} \in
\mathcal{S}_1(k)} {H( {X_{\mathcal{A}} |X_{(2^{\mathcal{A}} -
\{ \mathcal{A}\} )} } )} }
\\\nonumber
& = I(X;X_\emptyset  ) + I(X;X_{\{L\}}|X_{(2^{\mathcal{I}_{L-1}})}) +
I(X_{(2^{\mathcal{I}_{L-1}})};X_{\{L\}}|X_{\emptyset}) -
H(X_{(2^{\mathcal{L}})}|X,X_{(2^{\mathcal{I}_{L-1}})}, X_{\{L\}})
\\\nonumber
& \quad + \sum\limits_{k = 2}^L {\sum\limits_{\mathcal{A} \in
\mathcal{S}_1(k)} {H( {X_{\mathcal{A}} |X_{(2^{\mathcal{A}} -
\{ \mathcal{A}\} )} })} }
\\\nonumber
& = I(X;X_\emptyset  ) + I(X;X_{\{L\}}|X_{(2^{\mathcal{I}_{L-1}})}) +
I(X_{(2^{\mathcal{I}_{L-1}})};X_{\{L\}}|X_{\emptyset})
\\\nonumber
& \quad + \sum\limits_{k=2}^L{\left[\sum\limits_{\mathcal{A} \in
\mathcal{S}_1(k)} {H( {X_{\mathcal{A}} |X_{(2^{\mathcal{A}} -
\{ \mathcal{A}\} )} } )} - H(X_{(\mathcal{S}_1(k))}|X,
X_{(2^{\mathcal{I}_{L-1}})}, X_{(\mathcal{S}_2(k))})\right]}.
\end{align}
We arrange the sets in $\mathcal{S}_1(k)$ in some arbitrary order and denote them by $\mathcal{S}_{k,1},\cdots,\mathcal{S}_{k,N(k)}$, respectively, where $N(k) ={{L}\choose{k}} - {{L-1}\choose{k}}$. Then for each $k$,
\begin{align}\nonumber
& \sum\limits_{\mathcal{A} \in \mathcal{S}_1(k)} {H(
{X_{\mathcal{A}} |X_{(2^{\mathcal{A}} - \{ \mathcal{A}\} )} }
)} - H(X_{(\mathcal{S}_1(k))}|X, X_{(2^{\mathcal{I}_{L-1}})},
X_{(\mathcal{S}_2(k))})
\\\nonumber
& = \sum\limits_{i=1}^{N(k)}\left[ {H( {X_{\mathcal{S}_{k,i}} |X_{(2^{\mathcal{S}_{k,i}}
- \{ \mathcal{S}_{k,i}\} )} })} - H(X_{\mathcal{S}_{k,i}}|X,
X_{(2^{\mathcal{I}_{L-1}})}, X_{(\mathcal{S}_2(k))},X_{(\{\mathcal{S}_{k,j}\}_{j=1}^{i-1})})\right]\\\nonumber
&= \sum\limits_{i=1}^{N(k)} {I(X,
X_{(2^{\mathcal{I}_{L-1}})}, X_{(\mathcal{S}_2(k))},X_{(\{\mathcal{S}_{k,j}\}_{j=1}^{i-1})}; {X_{\mathcal{S}_{k,i}} |X_{(2^{\mathcal{S}_{k,i}}
- \{ \mathcal{S}_{k,i}\} )} })}
\\\nonumber
&=\sum\limits_{i=1}^{N(k)}{I(X_{(2^{\mathcal{I}_{L-1}})},X_{(\mathcal{S}_2(k))},X_{(\{\mathcal{S}_{k,j}\}^{i-1}_{j=1})};{X_{\mathcal{S}_{k,i}}|X_{(2^{\mathcal{S}_{k,i}}-\{\mathcal{S}_{k,i}\})}})}+\sum\limits_{i=1}^{N(k)}{I(X;X_{\mathcal{S}_{k,i}}|X_{(2^{\mathcal{I}_{L-1}})}X_{(\mathcal{S}_2(k))}X_{(\{\mathcal{S}_{k,j}\}^{i-1}_{j=1})})}
\\\nonumber
&=\sum\limits_{i=1}^{N(k)}{I(X_{(2^{\mathcal{I}_{L-1}})},X_{(\mathcal{S}_2(k))},X_{(\{\mathcal{S}_{k,j}\}^{i-1}_{j=1})};{X_{\mathcal{S}_{k,i}}|X_{(2^{\mathcal{S}_{k,i}}-\{\mathcal{S}_{k,i}\})}})}+I(X;X_{(\mathcal{S}_1(k))}|X_{(2^{\mathcal{I}_{L-1}})}X_{(\mathcal{S}_2(k))}).
\end{align}
Therefore, we have
\begin{align}\nonumber
R_L(\pi)&=I(X;X_{\emptyset})+I(X;X_{\{L\}}|X_{(2^{\mathcal{I}_{L-1}})})+I(X_{(2^{\mathcal{I}_{L-1}})};X_{\{L\}}|X_{\emptyset})
\\\nonumber
&\quad+\sum\limits_{k=2}^{L}{\left[\sum\limits_{i=1}^{N(k)}{I({X_{\mathcal{S}_{k,i}};X_{(\{\mathcal{S}_{k,j}\}^{i-1}_{j=1})}X_{(2^{\mathcal{I}_{L-1}})}X_{(\mathcal{S}_2(k))}|X_{(2^{\mathcal{S}_{k,i}}-\{\mathcal{S}_{k,i}\})}})}+I(X;X_{(\mathcal{S}_1(k))}|X_{(2^{\mathcal{I}_{L-1}})}X_{(\mathcal{S}_2(k))})\right]}
\\\nonumber
&=I(X;X_{\emptyset})+I(X;X_{(\mathcal{S}_2(L))},X_{\mathcal{I}_L}|X_{(2^{\mathcal{I}_{L-1}})})+I(X_{(2^{\mathcal{I}_{L-1}})};X_{\{L\}}|X_{\emptyset})
\\\label{eq:mutualInfor}
&\quad+\sum\limits_{k=2}^{L-1}{\left[\sum\limits_{i=1}^{N(k)}{I(X_{(2^{\mathcal{I}_{L-1}})},X_{(\mathcal{S}_2(k))},X_{(\{\mathcal{S}_{k,j}\}^{i-1}_{j=1})};{X_{\mathcal{S}_{k,i}}|X_{(2^{\mathcal{S}_{k,i}}-\{\mathcal{S}_{k,i}\})}})}\right]}.
\end{align}

\begin{figure}[t]
   \begin{center}
      \epsfig{file=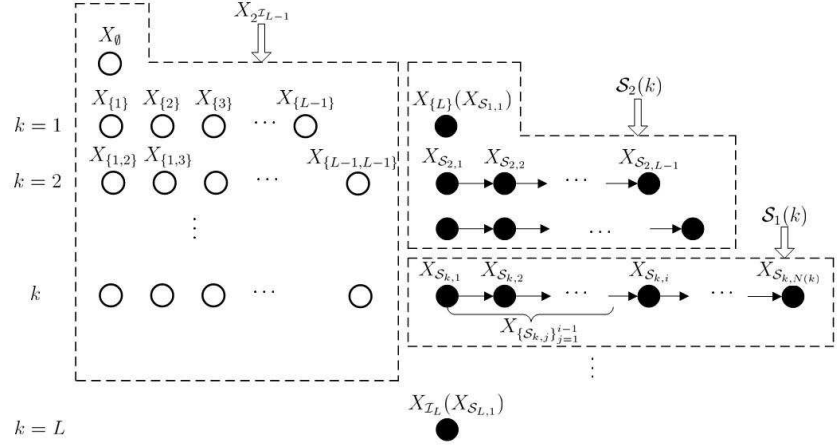, width = 4.5in}
      \caption{The Structure of auxiliary random variables for the VKG region.}
      \label{fig:fig_RV}
    \end{center}
\end{figure}

It follows from Lemma \ref{le:le1} that there exist an auxiliary
random variables $Z$ and a function $f$
such that
\begin{enumerate}
\item $Z$ is independent of
$(X_{(2^{\mathcal{I}_{L-1}})},X_{(\mathcal{S}_2(L))})$;

\item $X_{\mathcal{I}_L} = f(
X_{(2^{\mathcal{I}_{L-1}})},X_{(\mathcal{S}_2(L))},Z)$;

\item $X-(X_{(2^{\mathcal{I}_{L-1}})},X_{(\mathcal{S}_2(L))},X_{\mathcal{I}_L})-Z$ form a Markov chain.
\end{enumerate}
Therefore, we have
\begin{align}\nonumber
&I(X;X_{(\mathcal{S}_2(L))},X_{\mathcal{I}_L}|X_{(2^{\mathcal{I}_{L-1}})})=I(X;X_{(\mathcal{S}_2(L))},Z|X_{(2^{\mathcal{I}_{L-1}})}),
\\\nonumber
&I(X_{(2^{\mathcal{I}_{L-1}})};X_{\{L\}}|X_{\emptyset})=I(X_{(2^{\mathcal{I}_{L-1}})};X_{\{L\}},Z|X_{\emptyset}),
\end{align}
and
\begin{align}
&\nonumber I(X_{(2^{\mathcal{I}_{L-1}})},X_{(\mathcal{S}_2(k))},X_{(\{\mathcal{S}_{k,j}\}^{i-1}_{j=1})};{X_{\mathcal{S}_{k,i}}|X_{(2^{\mathcal{S}_{k,i}}-\{\mathcal{S}_{k,i}\})}})\\
\nonumber&=I(X_{(2^{\mathcal{I}_{L-1}})},X_{(\mathcal{S}_2(k))},X_{(\{\mathcal{S}_{k,j}\}^{i-1}_{j=1})},Z;{X_{\mathcal{S}_{k,i}}|X_{(2^{\mathcal{S}_{k,i}}-\{\mathcal{S}_{k,i}\})}})
\end{align}
for $1 \leq i \leq
N(k)$ and $2 \leq k \leq L-1$.
Now it can be easily verified that $(R_1(\pi),\cdots,R_L(\pi),D_{\mathcal{K}}(\pi),\emptyset\subset\mathcal{K}\subseteq\mathcal{I}_L)$ is preserved if we substitute $X_{\{L\}}$ with $(X_{\{L\}},Z)$, set
$X_{\mathcal{I}_L}$ to be a constant, and modify $\phi_{\mathcal{K}}$, $\emptyset\subset\mathcal{K}\subseteq\mathcal{I}_L$, accordingly. By the definition of the VKG* region, it is clear that $(R_1(\pi),\cdots,R_L(\pi),D_{\mathcal{K}}(\pi),\emptyset\subset\mathcal{K}\subseteq\mathcal{I}_L)\in\mathcal{RD}_{\text{VKG*}}$. The proof is complete.

\section{Proof of Theorem \ref{th:MDindividual}}
\label{sec:MDindividual}

Let $R^*_1=\psi(\{1\})$, and $R^*_k=\psi(\mathcal{I}_k)-\psi(\mathcal{I}_{k-1})$, $k\in\mathcal{I}_L-\{1\}$. By Lemma \ref{le:contra} and the property of contra-polymatroid \cite{Edmonds70}, $(R^*_1,\cdots,R^*_L)$ is a vertex of the rate region $\{(R_1,\cdots,R_L):R_{\mathcal{K}}\geq\psi(\mathcal{K}),\emptyset\subset\mathcal{K}\subseteq\mathcal{I}_L\}$; moreover, we have
\begin{align}
&\min\limits_{(R_1,\cdots,R_L)\in\mathcal{R}_{\text{IH-VKG}}(D_{\mathcal{K}},\mathcal{K}\in\mathcal{H}_L)}\sum\limits_{k=1}^L\alpha_kR_k\nonumber\\
&=\min\limits_{p_{X_{\emptyset}X_{\{1\}}\cdots X_{\{L\}}|X},\phi_{\mathcal{K}},\mathcal{K}\in\mathcal{H}_L}\sum\limits_{k=1}^L\alpha_kR^*_k,\label{eq:min'}
\end{align}
where the minimization in (\ref{eq:min'}) is over $p_{X_{\emptyset}X_{\{1\}}\cdots X_{\{L\}}|X}$, and $\phi_{\mathcal{K}}$, $\mathcal{K}\in\mathcal{H}_L$, subject to the constraints
\begin{align*}
D_{\mathcal{K}}\geq\mathbb{E}[d(X,\phi_{\mathcal{K}}(X_{(2^{\mathcal{K}})}))],\quad \mathcal{K}\in\mathcal{H}_L.
\end{align*}

It follows from Theorem \ref{th:VKG*} that $X_{\mathcal{I}_L}$ can be eliminated. Inspecting (\ref{eq:mutualInfor}) reveals that the same method can be used to eliminate $X_{\mathcal{K}}$, $\mathcal{K}\in\mathcal{S}_2(L)-\{L\}$, successively in the reverse order (i.e., the bottom-to-top and right-to-left order in Fig.\ref{fig:fig_RV}). For $k$ from $L-1$ to $2$, we write $R^*_k$ in a form analogous to (\ref{eq:mutualInfor}) and execute this elimination procedure. In this way all the auxiliary random variables, except $X_{\emptyset},X_{\{1\}},\cdots,X_{\{L\}}$, are eliminated. It can be verified that the resulting expression for $(R^*_1,\cdots,R^*_L)$ is
\begin{align*}
R^*_k=I(X;X_{\emptyset})+I(X,\{X_{\{i\}}\}_{i=1}^{k-1};X_{\{k\}}|X_{\emptyset}),\quad k\in\mathcal{I}_L.
\end{align*}
The proof is complete.

\section{Proof of Corollary \ref{cor:cor3}}\label{app:cor3}

First we shall show that (\ref{eq:min}) is greater than or equal to (\ref{eq:min2}). Let $X'_{\{k\}}=\phi_{\{k\}}(X_{\emptyset},X_{\{k\}})$, $k\in\mathcal{I}_L$, and $X'_{\mathcal{I}_k}=\phi_{\mathcal{I}_k}(X_{\emptyset},X_{\{1\}},\cdots,X_{\{k\}})$, $k\in\mathcal{I}_L-\{1\}$. It can be verified that
\begin{align*}
&\sum\limits_{k=1}^L\alpha_k[I(X;X_{\emptyset})+I(X,\{X_{\{i\}}\}_{i=1}^{k-1};X_{\{k\}}|X_{\emptyset})]\\
&=\sum\limits_{k=1}^L\alpha_k[I(X;X_{\emptyset})+I(\{X_{\{i\}}\}_{i=1}^{k-1};X_{\{k\}}|X_{\emptyset})+I(X;X_{\{k\}}|X_{\emptyset},\{X_{\{i\}}\}_{i=1}^{k-1})]\\
&=\sum\limits_{k=2}^L\alpha_k[I(X;X_{\emptyset})+I(\{X_{\{i\}}\}_{i=1}^{k-1};X_{\{k\}}|X_{\emptyset})]+\sum\limits_{k=1}^L(\alpha_k-\alpha_{k+1})I(X;X_{\emptyset},\{X_{\{i\}}\}_{i=1}^k)\\
&\geq\sum\limits_{k=2}^L\alpha_k[I(X;X_{\emptyset})+I(X'_{(\mathcal{H}_{k-1})};X'_{\{k\}}|X_{\emptyset})]+\sum\limits_{k=1}^L(\alpha_k-\alpha_{k+1})I(X;X_{\emptyset},X'_{(\mathcal{H}_k)})\\
&=\sum\limits_{k=1}^L\alpha_k[I(X;X_{\emptyset})+I(X'_{(\mathcal{H}_{k-1})};X'_{\{k\}}|X_{\emptyset})-I(X;X_{\emptyset},X'_{(\mathcal{H}_{k-1})})+I(X;X_{\emptyset},X'_{(\mathcal{H}_k)})]\\
&=\sum\limits_{k=1}^L\alpha_k[I(X;X_{\emptyset})+I(X'_{(\mathcal{H}_{k-1})};X'_{\{k\}}|X_{\emptyset})+I(X;X'_{\{k\}},X'_{\mathcal{I}_k}|X_{\emptyset},X'_{(\mathcal{H}_{k-1})})],
\end{align*}
where $\alpha_{L+1}\triangleq 0$.

Now we proceed to show that (\ref{eq:min2}) is greater than or equal to (\ref{eq:min}).  It follows from Lemma \ref{le:le1} that there exist a random variable $Z$ and a function $f$ such that
\begin{enumerate}
\item $Z$ is independent of $(X_{\emptyset},X_{(\mathcal{H}_{L-1})},X_{\{L\}})$;

\item $X_{\mathcal{I}_L}=f(X_{\emptyset},X_{(\mathcal{H}_{L-1})},X_{\{L\}},Z)$;

\item $X-(X_{\emptyset},X_{(\mathcal{H}_L)})-Z$ form a Markov chain.
\end{enumerate}
Note that
\begin{align*}
&I(X_{(\mathcal{H}_{L-1})};X_{\{L\}}|X_{\emptyset})=I(X_{(\mathcal{H}_{L-1})};X_{\{L\}},Z|X_{\emptyset}),\\
&I(X;X_{\{L\}},X_{\mathcal{I}_L}|X_{\emptyset},X_{(\mathcal{H}_{L-1})})=I(X;X_{\{L\}},Z|X_{\emptyset},X_{(\mathcal{H}_{L-1})}).
\end{align*}
Therefore, we can substitute $X_{\{L\}}$ with $(X_{\{L\}},Z)$ and eliminate $X_{\mathcal{I}_L}$. It is clear that one can successively eliminate $X_{\mathcal{I}_{L-1}},\cdots,X_{\mathcal{I}_2}$ in a similar manner. The proof is complete.

\section{Proof of Theorem \ref{th:binary}}
\label{sec:binary}

It is obvious that $D^*_{\{2\}}(D_{\{1\}},D_{\{1,2\}})=D_{\{1,2\}}$ if $D_{\{1\}}=\frac{1}{2}$. Therefore, we shall only consider the case $D_{\{1\}}<\frac{1}{2}$.

Since binary symmetric sources are successively refinable, it follows that
\begin{align*}
&R(D_{\{1\}})=1-H_b(D_{\{1\}}),\\
&R(R_1(D_{\{1\}}),D_{\{1\}},D_{\{1,2\}})=1-H_b(D_{\{1,2\}}),
\end{align*}
where $H_b(\cdot)$ is the binary entropy function. If $D_{\{1\}}<\frac{1}{2}$, then $R(D_{\{1\}})$ is achieved if and only if $p_{X_{\{1\}}|X}$ is a binary symmetric channel with crossover probability $D_{\{1\}}$; it is clear that $X$ is not weakly independent with the resulting $X_{\{1\}}$. Therefore, Theorem \ref{th:weakindependence} is applicable here.

Define $X_{\{1,2\}}=g_2(X_{\{1\}},X_{\{2\}})$. Note that we must have $\mathbb{E}[d(X,X_{\{1,2\}})]\leq D_{\{1,2\}}$ and
\begin{align*}
I(X;X_{\{1\}},X_{\{2\}})&=I(X;X_{\{1\}},X_{\{2\}},X_{\{1,2\}})\\
&=I(X;X_{\{1,2\}})\\
&=1-H_b(D_{\{1,2\}}),
\end{align*}
which implies that $X-X_{\{1,2\}}-(X_{\{1\}},X_{\{2\}})$ form a Markov chain and $p_{X_{\{1,2\}}|X}$ is a binary symmetric channel with crossover probability $D_{\{1,2\}}$. Therefore, $p_{XX_{\{1\}}X_{\{1,2\}}}$ is completely specified by the backward test channels shown in Fig. \ref{fig:fig_binary}. Now it is clear that one can obtain $D^*(D_{\{1\}},D_{\{1,2\}})$ by solving the following optimization problem
\begin{align*}
D^*(D_{\{1\}},D_{\{1,2\}})=\min\limits_{p_{X_{\{2\}}|XX_{\{1\}}X_{\{1,2\}}},g_1}\mathbb{E}[d(X,g_1(X_{\{2\}}))]
\end{align*}
subject to the constraints
\begin{enumerate}
\item $X_{\{1\}}$ and $X_{\{2\}}$ are independent;

\item $X_{\{1,2\}}$ is a deterministic function of $X_{\{1\}}$ and $X_{\{1,2\}}$;

\item $X-X_{\{1,2\}}-(X_{\{1\}},X_{\{2\}})$ form a Markov chain.
\end{enumerate}

Assume that $X_{\{2\}}$ takes values in $\{0,1,\cdots,n-1\}$ for some finite $n$. We tabulate $p_{XX_{\{1\}}X_{\{2\}}X_{\{1,2\}}}$, $p_{X_{\{1\}}X_{\{2\}}}$, $p_{XX_{\{2\}}}$, and $p_{X_{\{1\}}X_{\{2\}}X_{\{1,2\}}}$ for ease of reading.
\begin{center}
\begin{tabular}{c|c|c|c|c|c}
 \hline \hline
 \backslashbox{$x,x_{\{1\}},x_{\{1,2\}}$}{$x_{\{2\}}$}&0&1&2&$\cdots$&$n-1$\\
 \hline
 0,0,0&$a_{0,0}$&$a_{0,1}$&$a_{0,2}$&$\cdots$&$a_{0,n-1}$\\
 \hline
 0,0,1&$a_{1,0}$&$a_{1,1}$&$a_{1,2}$&$\cdots$&$a_{1,n-1}$\\
 \hline
 0,1,0&$a_{2,0}$&$a_{2,1}$&$a_{2,2}$&$\cdots$&$a_{2,n-1}$\\
 \hline
 $\vdots$&$\vdots$&$\vdots$&$\vdots$&$\ddots$&$\vdots$\\
 \hline
 1,1,1&$a_{7,0}$&$a_{7,1}$&$a_{7,2}$&$\cdots$&$a_{7,n-1}$\\
 \hline
\end{tabular}
\end{center}

\begin{center} 
\begin{tabular}{c|c|c|c}
 \hline \hline
 \backslashbox{$x_{\{1\}}$}{$x_{\{2\}}$}&0&$\cdots$&$n-1$\\
 \hline
 0&$a_{0,0}+a_{1,0}+a_{4,0}+a_{5,0}$&$\cdots$&$a_{0,n-1}+a_{1,n-1}+a_{4,n-1}+a_{5,n-1}$\\
 \hline
 1&$a_{2,0}+a_{3,0}+a_{6,0}+a_{7,0}$&$\cdots$&$a_{2,n-1}+a_{3,n-1}+a_{6,n-1}+a_{7,n-1}$\\
 \hline
\end{tabular}
\end{center}

\begin{center} 
\begin{tabular}{c|c|c|c}
 \hline \hline
 \backslashbox{$x$}{$x_{\{2\}}$}&0&$\cdots$&$n-1$\\
 \hline
 0&$a_{0,0}+a_{1,0}+a_{2,0}+a_{3,0}$&$\cdots$&$a_{0,n-1}+a_{1,n-1}+a_{2,n-1}+a_{3,n-1}$\\
 \hline
 1&$a_{4,0}+a_{5,0}+a_{6,0}+a_{7,0}$&$\cdots$&$a_{4,n-1}+a_{5,n-1}+a_{6,n-1}+a_{7,n-1}$\\
 \hline
\end{tabular}
\end{center}

\begin{center}
\begin{tabular}{c|c|c|c|c|c|c}
 \hline \hline
 \backslashbox{$x_{\{1,2\}}$}{$x_{\{1\}},x_{\{2\}}$}&0,0&$\cdots$&$0,n-1$&1,0&$\cdots$&$1,n-1$\\
 \hline
 0&$a_{0,0}+a_{4,0}$&$\cdots$&$a_{0,n-1}+a_{4,n-1}$&$a_{2,0}+a_{6,0}$&$\cdots$&$a_{2,n-1}+a_{6,n-1}$\\
 \hline
 1&$a_{1,0}+a_{5,0}$&$\cdots$&$a_{1,n-1}+a_{5,n-1}$&$a_{3,0}+a_{7,0}$&$\cdots$&$a_{3,n-1}+a_{7,n-1}$\\
 \hline
\end{tabular}
\end{center}

\begin{figure}[t]
   \begin{center}
      \epsfig{file=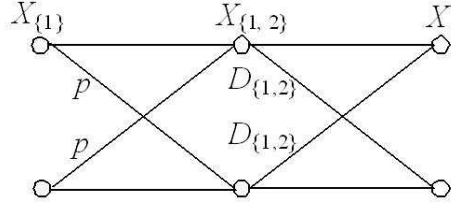, width = 2.5in}
      \caption{The backward channels for successive refinement of a binary symmetric source: $p=\frac{D_{\{1\}}-D_{\{1,2\}}}{1-2D_{\{1,2\}}}$.}
      \label{fig:fig_binary}
    \end{center}
\end{figure}

According to $p_{XX_{\{1\}}X_{\{1,2\}}}$ (cf. Fig. \ref{fig:fig_binary}), it is easy to see that
\begin{align}\nonumber
\sum\limits_{i=0}^{n-1}{a_{0,i}}&=\sum\limits_{i=0}^{n-1}{a_{7,i}}=\frac{1}{2}(1-p)(1-D_{\{1,2\}}),
\\\nonumber
\sum\limits_{i=0}^{n-1}{a_{1,i}}&=\sum\limits_{i=0}^{n-1}{a_{6,i}}=\frac{1}{2}pD_{\{1,2\}},
\\\nonumber
\sum\limits_{i=0}^{n-1}{a_{2,i}}&=\sum\limits_{i=0}^{n-1}{a_{5,i}}=\frac{1}{2}p(1-D_{\{1,2\}}),
\\\label{eq:constraints}
\sum\limits_{i=0}^{n-1}{a_{3,i}}&=\sum\limits_{i=0}^{n-1}{a_{4,i}}=\frac{1}{2}(1-p)D_{\{1,2\}}.
\end{align}

Furthermore, one can verify the following statements.
\begin{enumerate}
\item The fact that $X_{\{1\}}$ and $X_{\{2\}}$ are independent and that $X_{\{1\}}$ is uniformly distributed over $\{0,1\}$  implies
\begin{equation}\label{eq:condition1}
a_{0,i}+a_{1,i}+a_{4,i}+a_{5,i}=a_{2,i}+a_{3,i}+a_{6,i}+a_{7,i},\quad i=0,\cdots,n-1.
\end{equation}

\item The fact that $X_{\{1,2\}}$ is a deterministic function of $(X_{\{1\}},X_{\{2\}})$ implies
\begin{equation}\label{eq:condition2}
(a_{0,i}+a_{4,i})(a_{1,i}+a_{5,i})=(a_{2,i}+a_{6,i})(a_{3,i}+a_{7,i})=0,\quad i=0,\cdots,n-1.
\end{equation}

\item The fact that $X-X_{\{1,2\}}-(X_{\{1\}},X_{\{2\}})$ form a Markov chain implies
\begin{align}\nonumber
a_{0,i}&=\frac{1-D_{\{1,2\}}}{D_{\{1,2\}}}a_{4,i},\quad
a_{5,i}=\frac{1-D_{\{1,2\}}}{D_{\{1,2\}}}a_{1,i}
\\\label{eq:condition3}
a_{2,i}&=\frac{1-D_{\{1,2\}}}{D_{\{1,2\}}}a_{6,i},\quad
a_{7,i}=\frac{1-D_{\{1,2\}}}{D_{\{1,2\}}}a_{3,i},\quad i=0,\cdots,n-1.
\end{align}
\end{enumerate}

According to (\ref{eq:condition2}), there are four
possibilities for each $i$:
\begin{align}\nonumber
a_{0,i}&=a_{2,i}=a_{4,i}=a_{6,i}=0,
\\\nonumber
\text{or} \qquad a_{0,i}&=a_{3,i}=a_{4,i}=a_{7,i}=0,
\\\nonumber
\text{or} \qquad a_{1,i}&=a_{2,i}=a_{5,i}=a_{6,i}=0,
\\\nonumber
\text{or} \qquad a_{1,i}&=a_{3,i}=a_{5,i}=a_{7,i}=0, \quad
i=0,\cdots,n-1.
\end{align}
Moreover, in view of (\ref{eq:condition1}), we can partition $\{0,1,\cdots,n-1\}$ into four
disjoint sets $\mathcal{S}_j$, $j=1,2,3,4$, such that
\begin{align}\nonumber
a_{1,i}+a_{5,i}&=a_{3,i}+a_{7,i},\quad i\in \mathcal{S}_1
\\\nonumber
a_{1,i}+a_{5,i}&=a_{2,i}+a_{6,i},\quad i\in \mathcal{S}_2
\\\nonumber
a_{0,i}+a_{4,i}&=a_{3,i}+a_{7,i},\quad i\in \mathcal{S}_3
\\\label{eq:4set}
a_{0,i}+a_{4,i}&=a_{2,i}+a_{6,i},\quad i\in \mathcal{S}_4.
\end{align}
Combining (\ref{eq:condition3}) and (\ref{eq:4set}) yields
\begin{align}\nonumber
a_{1,i}&=a_{3,i},\;a_{5,i}=a_{7,i},\quad i\in \mathcal{S}_1
\\\nonumber
a_{1,i}&=a_{6,i},\;a_{2,i}=a_{5,i},\quad i\in \mathcal{S}_2
\\\nonumber
a_{0,i}&=a_{7,i},\;a_{3,i}=a_{4,i},\quad i\in \mathcal{S}_3
\\\nonumber
a_{0,i}&=a_{2,i},\;a_{4,i}=a_{6,i},\quad i\in \mathcal{S}_4.
\end{align}
It is easy to see that different values in each $\mathcal{S}_j$, $j=1,2,3,4$, can be combined. That is to say, we can assume that
$X_{\{2\}}$ takes values in
$\{0,1,2,3\}$ with no loss of generality. As a consequence, $p_{XX_{\{1\}}X_{\{2\}}X_{\{1,2\}}}$ and $p_{XX_{\{2\}}}$ can be re-tabulated as follows.
\begin{center}
\begin{tabular}{c|c|c|c|c}
 \hline \hline
 \backslashbox{$x,x_{\{1\}},x_{\{1,2\}}$}{$x_{\{2\}}$}&0&1&2&3\\
 \hline
 0,0,0&0&0&$\beta_3$&$\beta_4$\\
 \hline
 0,0,1&$\alpha_1$&$\alpha_2$&0&0\\
 \hline
 0,1,0&0&$\beta_2$&0&$\beta_4$\\
 \hline
 0,1,1&$\alpha_1$&0&$\alpha_3$&0\\
 \hline
 1,0,0&0&0&$\alpha_3$&$\alpha_4$\\
 \hline
 1,0,1&$\beta_1$&$\beta_2$&0&0\\
 \hline
 1,1,0&0&$\alpha_2$&0&$\alpha_4$\\
 \hline
 1,1,1&$\beta_1$&0&$\beta_3$&0\\
 \hline
\end{tabular}
\end{center}
\begin{center} 
\begin{tabular}{c|c|c|c|c}
 \hline \hline
 \backslashbox{$x$}{$x_{\{2\}}$}&0&1&2&3\\
 \hline
 0&$2\alpha_1$&$\alpha_2+\beta_2$&$\alpha_3+\beta_3$&$2\beta_4$\\
 \hline
 1&$2\beta_1$&$\alpha_2+\beta_2$&$\alpha_3+\beta_3$&$2\alpha_4$\\
 \hline
\end{tabular}
\end{center}
Note that $\alpha_i$ and $\beta_i$ satisfy
\begin{align}\nonumber
&\beta_i=\frac{1-D_{\{1,2\}}}{D_{\{1,2\}}}\alpha_i,\quad i=1,2,3,4,
\\\nonumber
&\alpha_1+\alpha_2=\alpha_4+\alpha_2=\frac{1}{2}pD_{\{1,2\}},
\\\nonumber
&\alpha_1+\alpha_3=\frac{1}{2}(1-p)D_{\{1,2\}},
\end{align}
where the first four equalities follow (\ref{eq:condition3}) while
the others follow (\ref{eq:constraints}). Using $X_{\{2\}}$ to reconstruct $X$, one can achieve
\begin{align}\nonumber
D_{\{2\}}&=2\alpha_1+\alpha_2+\beta_2+\alpha_3+\beta_3+2\alpha_4
\\\nonumber &=\frac{1}{2}-\left(\beta_1-\alpha_1+\beta_4-\alpha_4\right)\\
\nonumber&=\frac{1}{2}-\frac{1-2D_{\{1,2\}}}{D_{\{1,2\}}}\alpha_1-\frac{1-2D_{\{1,2\}}}{D_{\{1,2\}}}\alpha_4.
\end{align}
It can be
easily verified that $D_{\{2\}}$ is minimized when
$\alpha_1=\alpha_4=\frac{1}{2}pD_{\{1,2\}}$. Therefore, we have
\begin{align}\nonumber
D^*_{\{2\}}(D_{\{1\}},D_{\{1,2\}})&=2pD_{\{1,2\}}+\frac{1}{2}(1-2p)\\\nonumber &=\frac{1}{2}+D_{\{1,2\}}-D_{\{1\}}.
\end{align}


\begin{thebibliography}{1}


\bibitem{Ahlswede85}
R. Ahlswede, ``The rate-distortion region for multiple descriptions
without excess rate,'' \emph{IEEE Trans. Inf. Theory}, vol. 31, pp.
721-726, Nov. 1985.

\bibitem{BY89}
T. Berger, and R. Yeung, ``Multiterminal source encoding with encoder breakdown," {\em IEEE Trans. Inf. Theory}, vol. 35, pp. 237-244, Mar.
1989.

\bibitem{Chen09}
J. Chen, ``Rate region of Gaussian multiple description coding with
individual and central distortion constraints,''
\emph{IEEE Trans. Inf. Theory}, vol. 55, pp.
3991-4005, Sep. 2009.

\bibitem{Csiszar81}
I. Csiszar and J. Korner, \emph{Information Theory: Coding Theorems
for Discrete Memoryless Systems}. Budapest, Hungray, AKADEMIAI
KIADO, 1981.

\bibitem{Edmonds70}
J. Edmonds, ``Submodular functions, matroids and certain polyhedra,"
in {\em Combinatorial Structures and Their Applications}, R. Guy, H.
Hanani, N. Sauer, and J. Schonheim, Eds. New York: Gordon and
Breach, 1970, pp. 69–87.

\bibitem{Equitz91}
W. H. R. Equitz and T.  M. Cover, ``Successive refinement of
information,'' \emph{IEEE Trans. Inf. Theory}, vol. 37, pp. 269-275,
Mar. 1991.

\bibitem{EGC}
A. El Gamal and T. Cover, ``Achievable rates for multiple
descriptions,'' \emph{IEEE Trans. Inf. Theory}, vol. 28, pp.
851-857, Nov. 1982.

\bibitem{Feng05}
H. Feng and M. Effros, ``On the rate loss of multiple description
source codes,'' \emph{IEEE Trans. Inf. Theory}, vol. 51, pp.
671-683, Feb. 2005.

\bibitem{Fu02}
F. Fu and R. W. Yeung, ``On the rate-distortion region for multiple
descriptions,'' \emph{IEEE Trans. Inf. Theory}, vol. 48, pp.
2012-2021, July 2002.

\bibitem{Koshelev80}
V. Koshelev, ``Hierarchical coding of discrete sources,''
\emph{Probl. Pered. Inform.}, vol. 16, no. 3, pp. 31-49, 1980.

\bibitem{Koshelev81}
--, ``An evaluation of the average distortion for discrete schemes
of sequential approximation," \emph{Probl.Pered. Inform.}, vol. 17,
no. 3, pp. 20-33, 1981.

\bibitem{Lastras06}
L. Lastras-Montano and V. Castelli, ``Near sufficiency of random
coding for two descriptions,'' \emph{IEEE Trans. Inf. Theory}, vol.
52, pp. 681-695, Feb. 2006.

\bibitem{Ozarow80}
L. Ozarow, ``On a source-coding problem with two channels and three
receivers,'' \emph{Bell Syst. Tech. J.}, vol. 59, no. 10, pp.
1909-1921, Dec. 1980.

\bibitem{Pradhan04}
S. S. Pradhan, R. Puri, and K. Ramchandran, ``$n$-channel symmetric
multiple descriptions-part I: $(n,k)$ source-channel erasure codes,''
\emph{IEEE Trans. Inf. Theory}, vol. 50, pp. 47-61, Jan.. 2004.

\bibitem{Puri05}
R. Puri, S. S. Pradhan, and K. Ramchandran, ``$n$-channel symmetric
multiple descriptions-part II: an achievable rate-distortion
region,'' \emph{IEEE Trans. Inf. Theory}, vol. 51, pp. 1377-1392,
Apr. 2005.

\bibitem{Rimoldi94}
B. Rimoldi, ``Successive refinement of information: characterization
of the achievable rates,'' \emph{IEEE Trans. Inf. Theory}, vol. 40,
pp. 253-259, Jan. 1994.

\bibitem{Shannon58}
C. Shannon, ``Channels with side information at the transmitter,"
{\em IBM J. Res. Devel.}, vol.~2, pp.~289-293, 1958.

\bibitem{TC08}
C. Tian and J. Chen, ``New coding schemes for the symmetric $K$-description
problem," {\em IEEE Trans. Inf. Theory}, submitted for publication.

\bibitem{CMD09}
C. Tian, S. Mohajer, and S. Diggavi, ``Approximating the
Gaussian multiple description rate region under symmetric
distortion constraints," {\em IEEE Trans. Inf. Theory}, vol. 55, pp. 3869-3891, Aug. 2009.

\bibitem{Tuncel03}
E. Tuncel and K. Rose, ``Additive successive refinement,''
\emph{IEEE Trans. Inf. Theory}, vol. 49, pp. 1983-1991, Aug. 2003.

\bibitem{Venkataramani03}
R. Venkataramani, G. Kramer, and V. K. Goyal, ``Multiple description
coding with many channels," \emph{IEEE Trans. Inf. Theory}, vol. 49,
pp. 2106-2114, Sept. 2003.


\bibitem{Wang07}
H. Wang and P. Viswanath, ``Vector Gaussian multiple description
with individual and central receivers,'' \emph{IEEE Trans. Inf.
Theory}, vol. 53, pp. 2133-2153, Jun. 2007.

\bibitem{Zhang-Berger87}
Z. Zhang, and T. Berger, ``New results in binary multiple
descriptions," \emph{IEEE Trans. Inf. Theory}, vol. 33, pp. 50-521,
July 1987.

\bibitem{Zamir99}
R. Zamir, ``Gaussian codes and Shannon bounds for multiple
descriptions,'' \emph{IEEE Trans. Inf. Theory}, vol. 45, pp.
2629-2636, Nov. 1999.




\end{thebibliography}
\end{document}